\theoremstyle{plain}
\newtheorem{theorem}{Theorem}[section]
\newtheorem{lemma}[theorem]{Lemma}
\newtheorem{prop}[theorem]{Proposition}
\theoremstyle{definition}
\numberwithin{equation}{section}
\begin{document}
	
	%\title[] {}

	%\begin{abstract}
	
	%\end{abstract}
	%\maketitle

\title[Existence and Uniqueness of NHGs]{Existence and Uniqueness of Near-Horizon Geometries for 5-Dimensional Black Holes}
	
\author{Aghil Alaee}
\address{Center of Mathematical Sciences and Applications, Harvard University\\ 20 Garden Street\\
	Cambridge MA 02138, USA}
\email{aghil.alaee@cmsa.fas.harvard.edu}	

%\author{Aghil Alaee}
%\address{Department of Mathematics\\
%University of Toronto\\
%PG205D, 45 St. George Street\\
%Toronto, Canada}
%\email{a.alaeekhangha@utoronto.ca}
		
\author{Marcus Khuri}
\address{Department of Mathematics, Stony Brook University, Stony Brook, NY 11794, USA}
\email{khuri@math.sunysb.edu}
	
\author{Hari Kunduri}
\address{Department of Mathematics and Statistics\\
Memorial University of Newfoundland\\
St John's NL A1C 4P5, Canada}
\email{hkkunduri@mun.ca}
	
	%\date{\today}
	
\thanks{A. Alaee acknowledges the support of NSERC Postdoctoral Fellowship 502873 and support of the John Templeton Foundation. M. Khuri acknowledges the support of NSF Grant DMS-1708798. H. Kunduri acknowledges the support of NSERC Discovery Grant RGPIN-2018-04887.}

\begin{abstract}
We prove existence of all possible bi-axisymmetric near-horizon geometries of 5-dimensional minimal supergravity. These solutions possess the cross-sectional horizon topology $S^3$, $S^1\times S^2$, or $L(p,q)$ and come with prescribed electric charge, two angular momenta, and a dipole charge (in the ring case). Moreover, we establish uniqueness of these solutions up to an isometry of the symmetric space $G_{2(2)}/SO(4)$.
\end{abstract}

	\maketitle
%\tableofcontents
	
\section{Introduction}
An open problem in general relativity is the classification of stationary, asymptotically flat black hole solutions $(M,\mathbf{g})$ in spacetime dimension $D\geq 4$ \cite{chrusciel2012stationary, emparan2008black, hollands2012black}. In the analytic setting, the rigidity theorem asserts that a stationary, rotating black hole is axisymmetric, that is, it admits an additional spatial isometry with closed orbits \cite{hollands2009stationary,Hollands2007,moncrief2008symmetries}. The Killing field generating this isometry commutes with the stationary Killing field. Solutions of the Einstein-Maxwell equations in $D=4$ in this setting can be recast as harmonic maps from the upper half plane to complex hyperbolic space equipped with its canonical metric of negative sectional curvature. This can be used to show that a solution is uniquely specified by three real parameters. Along with the fact that a $D=4$ black hole must have spatial horizon topology $S^2$, this demonstrates that the Kerr-Newman family exhausts all possible analytic, asymptotically flat black hole solutions of the Einstein-Maxwell equations (see \cite{chrusciel2012stationary} for a review).  A non-rotating black hole must be static, and this can be shown to imply it must belong to the Reissner-Nordstr\"{o}m family, which itself is a subset of the Kerr-Newman family.

The analogous classification problem in $D=5$ remains open. The rigidity theorem guarantees only the existence of a single $U(1)$ isometry subgroup. It proves useful to assume the existence of an additional rotational isometry. In this case a harmonic map formulation exists for the pure vacuum Einstein equations and certain supergravity theories. The latter are natural generalizations of standard Einstein-Maxwell theory which are of interest in high-energy physics because their action functionals are invariant under supersymmetry transformations. The topology theorem in this restricted setting asserts that the topology of (a connected component of) the horizon must be $S^3, S^1 \times S^2$, or $L(p,q)$ \cite{galloway2006rigidity,Galloway2006,Hollands2008}.  There is now a large literature on finding explicit solutions in the above class, and in particular, examples have been constructed of several of these horizon types \cite{kunduri2014supersymmetric,Myers1986, Pomeransky2006, tomizawa2016supersymmetric}.  Using the vacuum harmonic map formulation, it has been proved that specification of a certain set of invariants (interval data) uniquely characterize a black hole solution in this class \cite{Figueras2010,Hollands2008}.  More precisely, this data characterizes the fixed point sets of the $U(1)^2$ action on spacetime, as well as the hypersurfaces on which a stationary Killing field is null.  This fixes the topology of the horizon and domain of outer communication,  as well as other geometric invariants such as the mass and angular momenta of the spacetime. Existence, however, remains a challenging open problem although recent progress has been made \cite{Khuri:2017xsc}.

To address the classification problem we will restrict attention to the class of {\it extreme} black holes.  This subset of stationary solutions have vanishing surface gravity, that is the event horizon is a degenerate Killing horizon. Extreme black holes play an important role in various contexts. Physically, for fixed mass they have maximal charge and/or angular momenta and saturate certain geometric inequalities.  In particular, extreme black hole initial data is known to minimize the mass amongst black hole initial data with fixed angular momenta and charge. This has been established in $D=4$ \cite{dain2012geometric} and in $D=5$ for black hole initial data having $S^3$ \cite{Alaee:2015pwa,Alaee:2016jlp} and $S^1 \times S^2$ \cite{Alaee:2017ygv} horizons.  From the standpoint of high-energy physics, the vanishing surface gravity implies that extreme black holes do not Hawking radiate.  Indeed, they are the best understood within theories of quantum gravity such as string theory, which has supplied a statistical account of the Bekenstein-Hawking entropy of a large class of extreme black holes \cite{Strominger1996}.

The key property of extreme black holes that we will exploit is that each admits an associated \emph{near-horizon geometry} \cite{Kunduri:2013ana}. This is a well-defined geometric limit which yields a precise description of spacetime in a neighborhood of the degenerate event horizon.  This limit preserves certain properties of the parent black hole, such as the horizon geometry, although asymptotic information such as the mass and angular velocity is lost. Suitable definitions of charge and angular momenta, however, do exist although there are subtleties in extrapolating these to the analogous quantities defined in the usual way on the asymptotic sphere at spatial infinity.  Classifying near-horizon geometries gives valuable data on the possible set of all extreme black hole. In particular, the absence of a near-horizon geometry with particular horizon topology (or geometry) implies non-existence of an extreme black hole with that horizon. The inverse problem, proving the existence of parent extreme black holes with prescribed asymptotic behavior is a difficult open issue;  progress in this direction has only recently been made in analyzing the moduli space of transverse deformations along the outgoing null radial direction in the axisymmetric case \cite{Li:2015wsa, Li:2018knr} or in the presence of supersymmetry~\cite{Dunajski:2016rtx}.

As we explain below, the near-horizon limit zooms in on the region near the event horizon. This removes a radial degree of freedom.  Thus instead of solving  PDEs on a (stationary) Lorentzian manifold of dimension $D$, the field equations reduce to geometric equations on a $D-2$-dimensional closed Riemannian manifold (a spatial cross-section of the event horizon). This is clearly a considerable simplification, although it is still very difficult in general to perform a classification. Typically an additional assumption must be imposed.  For example, a classification of supersymmetric near-horizon geometries is possible in $D=5$ supergravity \cite{Reall:2002bh}.  In the problem at hand,  $U(1)^{D-3}$-invariant near-horizon geometries of stationary, extreme black holes are cohomogeneity one.  The near-horizon geometries reduce to harmonic maps from a closed interval to a target space with nonpositive sectional curvature.  It turns out that these harmonic maps are singular in the sense that they will blow-up at the endpoints of the orbit space interval.  Given a harmonic map satisfying appropriate boundary conditions,  one can always integrate the rest of Einstein's equations to obtain a near-horizon geometry.

For the large class of gravitational theories which admit a harmonic map formulation (e.g. pure vacuum and various supergravity models) it is possible to integrate the resulting ODEs explicitly to obtain an analytic expression for a coset representative matrix associated with the target space~\cite{kunduri2011constructing}.  In the vacuum, it is possible to determine the harmonic map scalars explicitly and then obtain a full classification of near-horizon geometries of  $D$-dimensional stationary extreme vacuum black holes admitting a $U(1)^{D-3}$ isometry subgroup \cite{hollands2010all}. However, in supergravity theories with more complicated matter content (scalar fields and multiple Maxwell fields) the problem of obtaining the harmonic map functions from the coset matrix reduces to solving a large set of algebraic constraints which is, in practice, not possible.  This makes it impossible to reconstruct the near-horizon geometries from the harmonic map and prevents a classification, mainly because one cannot  implement global regularity conditions on the local solutions.  Thus neither the problem of existence nor uniqueness can be addressed, which is unsatisfactory.

In this work we will instead introduce an abstract approach into the study of this problem of existence and uniqueness for near-horizon geometries in theories admitting a harmonic map formulation. Of particular interest is $D=5$ minimal supergravity, as it arises within the context of string theory compactified on the torus $T^5$.
%[[In addition, we will also extend our results to the case with a positive cosmological %constant. In this latter setting, it is well known that no harmonic map formulation %exists in general. However, we can still introduce `quasi-harmonic' maps and use the %associated convexity properties to obtain partial results.]] THIS SHOULD BE REMOVED.
This approach is originally due to Weinstein \cite{Weinstein}, who exploited the fact that the stationary, axisymmetric Einstein-Maxwell equations in 4-dimensions reduce to a harmonic map with prescribed singularities $\Psi: \mathbb{R}^3\setminus \Gamma \to \mathbb{H}_{\mathbb{C}}^2$ where $\Gamma$ represents the axis of symmetry.  In particular he succeeded in proving the existence and uniqueness of such harmonic maps with prescribed singularities.  This elegant technique yields solutions describing multiple rotating black hole configurations which are smooth away from the axis. Very recently, this method was extended to the $D=5$ bi-axisymmetric vacuum setting by one of the present authors, Weinstein, and Yamada \cite{KhuriWeinsteinYamada1,Khuri:2017xsc}.  This work achieved existence and uniqueness results for singular harmonic maps corresponding to black holes of lens and ring topology and having various asymptotics at infinity.

\section{Statement of Main Results}

We will consider five dimensional spacetimes $(M, \mathbf{g},F)$ where $M$ is a smooth, orientable manifold equipped with a Lorentzian metric $\mathbf{g}$ having signature $(-,+,+,+,+)$, and $F$ is a closed 2-form describing the Maxwell field.  A solution $(M, \mathbf{g},F)$ of $D=5$ minimal supergravity is a critical point of the following action functional
\begin{equation}\label{action}
\mathcal{S} = \int_M R \star 1 - \frac{1}{2} F \wedge \star F - \frac{1}{3\sqrt{3}}  F \wedge F \wedge \mathcal{A},
\end{equation}
where $\star$ is the Hodge dual operator associated to $\mathbf{g}$ and $R$ is scalar curvature. In addition, a local 1-form gauge potential has been introduced so that $F= d\mathcal{A}$, although in general $H_2(M) \neq 0$ so $\mathcal{A}$ need not be globally defined.  This theory automatically includes vacuum general relativity when $F \equiv 0$.  The spacetime field equations derived from this functional are
\begin{equation} \label{SFEintro}
\begin{aligned}
&R_{ab} = \frac{1}{2} F_{ac} F_b^{~c} - \frac{1}{12} |F|^2 \mathbf{g}_{ab} , \\
&d\star F + \frac{1}{\sqrt{3}} F \wedge F =0 .
\end{aligned}
\end{equation}
Unlike the more familiar pure Einstein-Maxwell system, $d \star F \neq 0$.  As discussed above, the theory \eqref{action} is more natural in a variety of contexts. Firstly, it arises in standard dimensional reduction on tori of the 10 and 11-dimensional supergravity theories which govern the low-energy dynamics of string and M-theory.  Secondly, as we discuss below, the field equations reduced on spacetimes having a $U(1) \times U(1)$ action by isometries admit a harmonic map formulation with nonpositively curved target space.

Consider a stationary 5-dimensional spacetime containing a degenerate Killing horizon with Killing field $V$.  This implies that there is an embedded null hypersurface $\mathcal{N}$ on which $|V|^2_\mathcal{N} =0$ and $\nabla_V V|_\mathcal{N} =0$.  A spatial cross section of $\mathcal{N}$ is a 3-dimensional closed Riemannian manifold $\mathcal{H}$.  A spacetime containing an extreme black hole would satisfy these conditions.  In a sufficiently small neighborhood of the event horizon we may always introduce adapted Gaussian null coordinates that describe the near-horizon geometry with spacetime metric
\begin{equation}\label{NHGg}
\mathbf{g}_{\text{NH}} = r^2 \alpha(y) dv^2 + 2dv dr + 2 r \beta_a(y) dv dy^a + \gamma_{ab} dy^a dy^b,
\end{equation}
where $y^a, a = 1\ldots 3$ are local coordinates on $\mathcal{H}$ with $\gamma_{ab}$ its induced Riemannian metric, $\alpha, \beta_a$ are a function and 1-form on $\mathcal{H}$ respectively, and $V = \partial_v$. The event horizon $\mathcal{N}$ corresponds to the null hypersurface $r=0$. Similarly the Maxwell field may be expressed in this coordinate system by
\begin{equation}
F_{\text{NH}} = F_{vr}(y) dv \wedge dr + r F_{va}(y) dv \wedge dy^a + \tilde{F},
\end{equation}
where $\tilde{F}$ is a closed 2-form on $\mathcal{H}$.  In fact the Bianchi identity $dF =0$ further implies that
\begin{equation}\label{NHGF}
F_{\text{NH}} = \sqrt{3} d (\varsigma(y) r dv ) + \tilde{F},
\end{equation}
where we have set $\sqrt{3}\varsigma(y) = - F_{vr}$.  A lengthy computation \cite{Kunduri:2009ud} shows that the spacetime field equations \eqref{SFEintro} are \emph{equivalent} to the following set of equations defined on $\mathcal{H}$
\begin{align}\label{NHGEq1}
\begin{split}
\text{Ric}(\gamma)_{ab} &= \frac{1}{2}\beta_a \beta_b - \nabla_{(a} \beta_{b)} + \frac{1}{2} \tilde{F}_{ac} \tilde{F}_{bd} \gamma^{cd} + \frac{1}{2} \varsigma^2 \gamma_{ab} - \frac{\gamma_{ab}}{12} |\tilde{F}|^2,  \\
\alpha &= \frac{1}{2}\beta_a \beta^a - \frac{1}{2} \nabla_a \beta^a - \varsigma^2 - \frac{|\tilde{F}|^2}{12}, \\
d\star_{\gamma} \tilde{F} &= -\star_\gamma i_\beta \tilde{F} - \sqrt{3} \star_\gamma (d \varsigma - \varsigma \beta) + 2 \varsigma \tilde{F},
\end{split}
\end{align}
where  $\nabla$ is the connection associated to $(\mathcal{H},\gamma)$.  The above formulation of the field equations on near-horizon geometry spacetimes have many advantages over a standard spacetime approach.  In particular, equations \eqref{NHGEq1} are defined on a closed Riemannian manifold $(\mathcal{H},\gamma)$ as opposed to a non-compact Lorentzian one. This is a considerable simplification which facilitates global arguments.

The electric charge associated to the near-horizon geometry is given by
\begin{equation}\label{electriccharge}
\mathcal{Q}  = \frac{1}{16\pi} \int_\mathcal{H} \left( \star F + \frac{1}{\sqrt{3}} \mathcal{A} \wedge F\right).
\end{equation}
Note that the field equations imply that the integrand is a closed 3-form.  If $H_2(\mathcal{H})$ is nontrivial, then a \emph{dipole charge} may be defined by
\begin{equation}\label{dipolecharge}
\mathcal{D}[\mathbf{C}] = \frac{1}{2\pi} \int_\mathbf{C} F,
\end{equation}
for each homology class $[\mathbf{C}] \in H_2(\mathcal{H})$.  This is a `local' charge in the sense that it is not associated to a conserved magnetic charge. Furthermore, by introducing Killing fields $\eta_{(i)}$ that generate the $U(1)^2$ isometry with associated $2\pi$-orbits, so that
\begin{equation}\label{biaxisymmetry}
\mathcal{L}_{\eta_{(i)}} \mathbf{g} = 0 \;,\qquad \mathcal{L}_{\eta_{(i)}} F = 0 \;,
\end{equation}
we may define angular momenta
\begin{equation}\label{angularmomenta}
\mathcal{J}_i = \frac{1}{16\pi} \int_\mathcal{H} \star d \eta_{(i)} + \mathcal{A}(\eta_{(i)}) \left( \star F + \frac{2}{3\sqrt{3}} \mathcal{A} \wedge F\right)
\end{equation}
where the same notation is used to denote the dual 1-form to the rotational Killing fields, and $\mathcal{L}$ represents Lie differentiation.
The field equations and the existence of the isometry imply once again that the integrand is a closed 3-form.

In the presence of a $U(1)^2$ isometry, solutions of the field equations for near-horizon geometries (or equivalently solutions $(\gamma, \beta , \varsigma, \tilde{F})$ of \eqref{NHGEq1}) can be interpreted as singular critical points of a weighted Dirichlet energy for maps from $(-1,1) \to G_{2(2)}/SO(4)$.  The latter is an 8-dimensional non-compact Riemannian manifold equipped with a metric having non-positive sectional curvature.

\begin{theorem}\label{maintheorem}
Given a cross-sectional horizon topology of $S^3$, $S^1\times S^2$, or $L(p,q)$ and values for the electric charge $\mathcal{Q}$, angular momenta $\mathcal{J}_i$, and dipole charge $\mathcal{D}$ (in the ring case), there exists a bi-axisymmetric near-horizon geometry solution of 5-dimensional minimal supergravity realizing these characteristics.
Moreover this solution is unique up to an isometry in the target space $G_{2(2)}/SO(4)$ and a translation in the arc length parameter for the harmonic map.
\end{theorem}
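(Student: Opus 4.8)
\medskip

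\noindent\emph{Proof strategy.} The plan is to follow the abstract harmonic map method of Weinstein~\cite{Weinstein}, recently adapted to $D=5$ vacuum gravity by Khuri, Weinstein and Yamada~\cite{Khuri:2017xsc, KhuriWeinsteinYamada1}, and to carry it out for the larger target $G_{2(2)}/SO(4)$ that governs minimal supergravity. First I would invoke the reduction of Kunduri--Lucietti~\cite{kunduri2011constructing}: under the $U(1)^2$ action the compact cross-section $\mathcal{H}$ has orbit space a closed interval, which after normalization is $[-1,1]$, and the data $(\gamma,\beta,\varsigma,\tilde{F})$ solving \eqref{NHGEq1} is equivalent to a map $\Phi:(-1,1)\to G_{2(2)}/SO(4)$, from which the entire near-horizon geometry is recovered by quadratures. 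The map $\Phi$ is a critical point of a weighted Dirichlet energy $\mathcal{E}[\Phi]=\int_{-1}^{1} w(x)\,\big|\Phi'(x)\big|^{2}\,dx$ (together with first-order terms), where $w$ vanishes at $x=\pm1$, so that $\Phi$ blows up logarithmically at the endpoints. The two integer vectors specifying which circle subgroup of $U(1)^2$ degenerates at each endpoint---the rod structure---distinguish $S^3$, $S^1\times S^2$ and $L(p,q)$, and they prescribe the leading singular profile of $\Phi$ at $x=\pm 1$; the data $\mathcal{Q}$, $\mathcal{J}_i$, $\mathcal{D}$ appear as subleading asymptotic coefficients.

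For existence I would fix a smooth model map $\Phi_{0}:(-1,1)\to G_{2(2)}/SO(4)$ exhibiting the prescribed leading singularities and charge data at $x=\pm 1$, and minimize the renormalized energy $\mathcal{E}[\Phi]-\mathcal{E}[\Phi_{0}]$ over the affine class of maps whose $G_{2(2)}/SO(4)$-distance to $\Phi_{0}$ lies in a weighted Sobolev space vanishing suitably at the endpoints. The essential structural input is that $G_{2(2)}/SO(4)$ is a Hadamard manifold: non-positivity of its sectional curvature makes the energy density geodesically convex, which gives finiteness and coercivity of the renormalized energy as well as weak lower semicontinuity, so the direct method yields a minimizer. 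Interior regularity is then immediate: finite weighted energy in one dimension forces local H\"older continuity, after which the Euler--Lagrange system, a second-order ODE system with smooth coefficients on $(-1,1)$, bootstraps to $C^{\infty}$.

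It remains to analyze the endpoints, reconstruct the geometry, and prove uniqueness. Near $x=\pm1$ I would establish the full asymptotic expansion of the minimizer, showing that its leading term agrees with that of $\Phi_{0}$---so the rod structure, hence the horizon topology, is realized---and that the subleading coefficients reproduce $\mathcal{Q}$, $\mathcal{J}_i$ and $\mathcal{D}$. Inserting this expansion into the Kunduri--Lucietti reconstruction produces a genuine near-horizon geometry; one then checks that $(\gamma,\beta,\varsigma,\tilde{F})$ extends smoothly across the two axes and that the resulting closed $3$-manifold is $S^3$, $S^1\times S^2$ or $L(p,q)$ as required, using the admissibility (coprimality) conditions on the rod vectors in the lens case. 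For uniqueness, given two solutions $\Phi_{1},\Phi_{2}$ with the same rod structure and charges---normalized by the residual target isometries and arc-length translations that preserve the problem so that their leading endpoint data coincide---the function $x\mapsto \dist_{G_{2(2)}/SO(4)}\!\big(\Phi_{1}(x),\Phi_{2}(x)\big)$ is a subsolution of the weighted operator $\big(w(x)(\,\cdot\,)'\big)'$ on $(-1,1)$ by non-positivity of the curvature, and the matched asymptotics force it to vanish at $x=\pm1$; the maximum principle then gives $\dist(\Phi_{1},\Phi_{2})\equiv 0$, i.e.\ $\Phi_{1}=\Phi_{2}$.

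The principal obstacles are twofold. First, the endpoint analysis must be sharp: besides finite renormalized energy one needs a definite decay rate for $\dist(\Phi,\Phi_{0})$, both to run the maximum principle and to guarantee a smooth reconstructed metric. The logarithmic degeneration of $w$ places this in a borderline regime, and, as in the work of Weinstein and of Khuri--Weinstein--Yamada, controlling it requires carefully built sub- and supersolution barriers. Second, coercivity of the renormalized energy must be established despite the nontrivial charge data and the presence of rank-two flats in $G_{2(2)}/SO(4)$, where the curvature---and hence the convexity that drives the estimate---degenerates to zero; one must verify that the functional still controls the deviation from $\Phi_{0}$ in a norm strong enough for compactness. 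The step from the vacuum target $SL(3,\R)/SO(3)$ to $G_{2(2)}/SO(4)$ does not alter the logical structure but enlarges the ODE system and the list of asymptotic profiles that must be classified at the endpoints.
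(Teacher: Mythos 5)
Your existence strategy is in the same spirit as the paper (singular harmonic map into $G_{2(2)}/SO(4)$, model map encoding rod structure and charges), though you propose a direct minimization in a weighted Sobolev class, whereas the paper solves Dirichlet problems on an exhaustion $\Omega_i$ with boundary data $\tilde{\Psi}_0$ and passes to a limit via a priori bounds. The coercivity problem you flag is real, and it is precisely where the paper does its main technical work: the second-variation/convexity inequality \eqref{eq4.6} only produces a useful lower bound if the curvature term $f_i$ is bounded below, which fails along flats since the target has rank two. The paper's resolution (Lemma \ref{lemma67}) is to show that, after passing to a subsequence and perturbing so that the limiting direction $\dot{\gamma}(1)$ is regular, there is a perpendicular direction with sectional curvature bounded away from zero, and the model map can be arranged so the relevant Jacobi-field component does not degenerate; this gives $f_i\geq 2\delta>0$ on a small interval and hence uniform distance and energy bounds (Proposition \ref{lemma4.2}). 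Your proposal acknowledges this obstacle but does not supply the mechanism, so the existence half is incomplete rather than wrong.

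The genuine gap is in your uniqueness argument. You claim that after normalizing by residual isometries the matched asymptotics force $\dist_N(\Phi_1,\Phi_2)\to 0$ at $x=\pm 1$, and then a maximum principle gives $\Phi_1\equiv\Phi_2$. But fixing the rod structure and the charges $\mathcal{Q},\mathcal{J}_i,\mathcal{D}$ only fixes differences of potentials at the poles (formulae \eqref{Q}--\eqref{DD}); it does not force the distance between two solutions to decay at the endpoints, only to stay bounded. In a rank-two nonpositively curved symmetric space two such maps can lie at a constant positive distance as parallel lines inside a maximal flat, and there are nontrivial target isometries (e.g.\ Harrison-type transformations mixing $\lambda_{ij}$ and $\chi$, as in the paper's appendix examples of the extreme Reissner--Nordstr\"om horizon and its counterpart in a background electric field) that preserve the asymptotic data while moving the map within a flat. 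This is exactly why Theorem \ref{maintheorem} asserts uniqueness only up to a target isometry and an arc-length translation. The paper's proof (Theorem \ref{theorem4.7}) accordingly does not obtain vanishing distance: the convexity inequality \eqref{eq4.74} yields that $\dist_N(\tilde{\Psi}_1,\tilde{\Psi}_2)$ is \emph{constant} and that the mixed sectional curvature term vanishes; then either the curvature of the coordinate $2$-planes is nonzero somewhere, in which case the two images coincide up to reparameterization, or it vanishes identically, in which case Lemma \ref{lemma4.6} shows the deformation surface is a totally geodesic flat and the Iwasawa decomposition of $G_{2(2)}$ supplies an isometry carrying one map to the other. Your maximum-principle endgame would prove a strictly stronger statement that the flat directions prevent; without an argument that excludes (or classifies) the flat case, the uniqueness step fails.
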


We note that this theorem includes the vacuum case \cite{hollands2010all} in which $\mathcal{Q}=\mathcal{D}=0$, and the target space is replaced by $SL(3,\mathbb{R})/SO(3)$. %Moreover, the solutions produced by this result are free of conical singularities.
Furthermore, the solutions produced by this result may have conical singularities at the poles, although a proper balancing of angular momentum and charges should alleviate this type of irregularity.

\section{Bi-Axisymmetric Near-Horizon Geometries}

We are interested in proving existence and uniqueness properties of a class of solutions $(\mathbf{g},F)$ of the field equations that describe \emph{near-horizon geometries}.  Such geometries describe in a precise way the spacetime sufficiently close to a degenerate Killing horizon.  The chief motivation for their study is that it is generally expected that any stationary, rotating extreme black hole must contain such a degenerate Killing horizon. In the analytic setting this has been established up to a set of measure zero in the moduli space of solutions \cite{hollands2009stationary}.  This fact is certainly true for all known examples.  Furthermore, each such `parent' extreme black hole has an unique associated near-horizon geometry.  Of course, the existence of a given near-horizon geometry does not guarantee the existence of a black hole spacetime with prescribed asymptotics. Nonetheless, knowledge of the space of near-horizon geometry solutions gives valuable information on the set of allowed extreme black holes.

Let us briefly recall the general notion of a near-horizon geometry before specializing to the $U(1)^2$-invariant setting (for a detailed review, see \cite{Kunduri:2013ana}). Let $\mathcal{N}$ be a Killing horizon with normal Killing vector field $V$.  We may always introduce a Gaussian null coordinate chart $(v,r,y^a)$ in a neighbourhood of $\mathcal{N}$ such that $V = \partial_v$, the horizon is located at $r =0$, and $y^a$ ($a=1,2,3$) are coordinates on $\mathcal{H}$, a spatial (constant $v$) section of $\mathcal{N}$.  It will be assumed that $\mathcal{H}$ is a 3-dimensional compact manifold. In this chart
\begin{align}
\begin{split}
\mathbf{g} &= r^2 \alpha(r,y) dv^2 + 2 dv dr + 2 r \beta_a(r,y) dv dy^a + \gamma_{ab}(r,y) dy^a dy^b, \\
F &= F_{vr} dv \wedge dr + F_{ra} dr \wedge dy^a + F_{va} dv \wedge dy^a + \frac{1}{2} \tilde{F}_{ab} dy^a \wedge dy^b.
\end{split}
\end{align}
Such coordinates are `ingoing' because the radial null vector field $-\partial_r$ is future directed at $r=0$.  The near-horizon geometry is obtained by substituting $v \to v / \varepsilon$, $r \to \varepsilon r$ and letting $\varepsilon \to 0$.  The resulting geometry has metric
\begin{equation}
\mathbf{g}_{\textrm{NH}} = r^2 \alpha(y) dv^2 + 2dv dr + 2 r \beta_a(y) dv dy^a + \gamma_{ab}(y) dy^a dy^b,
\end{equation}
where $(\alpha, \beta, \gamma)$ are defined on $\mathcal{H}$.  The Maxwell field does not automatically admit a well-defined limit; rather, upon use of the Bianchi identity $dF=0$, the identity $\text{Ric}(V,V)|_\mathcal{N} =0$, and the field equation \eqref{SFEintro} one finds $F_{va} =0$ at $r=0$. It then follows (assuming smoothness) that the near-horizon limit of the Maxwell field exists and is given by
\begin{equation}
F_{\textrm{NH}} = -d(F_{vr}(y) r dv) + \tilde{F},
\end{equation}
where $\tilde{F}$ is a closed 2-form on $\mathcal{H}$.  It is convenient to define $\varsigma := -F_{vr}/\sqrt{3}$.  The full spacetime field equations \eqref{SFEintro} for the spacetime fields $(\mathbf{g}_{\text{NH}},F_{\text{NH}})$ are equivalent to the coupled set of equations \eqref{NHGEq1} for the near-horizon data $(\gamma_{ab}, \beta_a, \varsigma, \tilde{F})$ \cite{Kunduri:2013ana}. In particular, the near-horizon spacetime is a solution of the same field equations as its parent extreme black hole.

Suppose now that the spacetime admits a $U(1)^2$ action by isometries, so that \eqref{biaxisymmetry} holds. These isometries extend to the cross-section $(\mathcal{H}, \gamma)$, and the generators of the symmetry are tangent to $\mathcal{H}$. Introduce angular coordinates $\phi^i$, $i=1,2$ associated with these symmetries, having $2\pi$ periodic orbits. Since the interior product of the symmetry generators with the volume form is closed, we may define a function $x$ by
\begin{equation}
dx = \mathcal{C} \text{Vol}_\gamma(\partial_{\phi^1}, \partial_{\phi^2}, \cdot),
\end{equation}
where $\mathcal{C}$ is a constant. As proved in \cite{Hollands2008}, the function $x$ parameterizes the orbit space $\mathcal{H}/ U(1)^2$, and $\mathcal{C}$ may be chosen so that $x \in [-1,1]$.  In the chart $(x,\phi^1, \phi^2)$ the cohomogeneity one horizon metric $\gamma$ then takes the form
\begin{equation}\label{horizonmetric}
\gamma_{ab}dy^a dy^b = \frac{dx^2}{\mathcal{C}^2 \det \lambda} + \lambda_{ij} d\phi^i d\phi^j ,
\end{equation}
and the area of the horizon is
\begin{equation}
A_H = 8\pi^2 \mathcal{C}^{-1}.
\end{equation}
A detailed analysis of the geometry of the torus action can be found in \cite{Hollands2008}. For $x \in (-1,1)$ the torus action is free (the matrix $\lambda_{ij}$ is rank 2), and the endpoints $x=\pm 1$ represent fixed points.  As $x \to \pm 1$, the Killing fields $\mathbf{v}_\pm = a_\pm^i \partial_{\phi^i} \to 0$ where $a^i_\pm \in \mathbb{Z}$.  The matrix $\lambda_{ij}$ is rank 1 at the fixed points, so $\lambda_{ij} a_\pm^i \to 0$ as $x \to \pm 1$.  We are free to choose $a_+ = (1,0)$ and $a_- = (q,p)$ for coprime $p,q \in \mathbb{Z}$ without loss of generality. The topology of $\mathcal{H}$ is then characterized by these integers: $(q,p) = (0,\pm 1)$ corresponds to $S^3$, $(q,p) = (\pm1,0)$ to $S^1 \times S^2$, and otherwise $\mathcal{H} \cong L(p,q)$.

In the remainder of this work we will normalize the area of the horizon by setting $\mathcal{C}=1$, i.e. so that  $A_H = 8\pi^2$.  The vector fields $\mathbf{v}_\pm$ degenerate smoothly at their respective fixed points provided we impose the requirement
\begin{equation}\label{conicalreg}
\lim_{x \to \pm 1} \frac{(1-x^2)^2}{\det \lambda \cdot \lambda_{ij} a_\pm^i a_\pm^j} = 1,
\end{equation}
which eliminates conical singularities. The overall horizon scale can be reinstated by dimensional analysis.

Remarkably, the combination of imposing $U(1)^2$ symmetry along with the field equations \eqref{SFEintro} (or equivalently directly from \eqref{NHGEq1}) on a general near-horizon geometry spacetime $(\mathbf{g}_{\text{NH}}, F_{\text{NH}})$ results in an enhancement of symmetry $\mathbb{R} \times U(1)^2 \to SO(2,1) \times U(1)^2$ \cite{Kunduri:2007vf}. This result holds rather generally for near-horizon geometries in $D$-dimensions invariant under a $U(1)^{D-3}$ torus action satisfying the field equations of general relativity coupled to an arbitrary number of Abelian gauge fields and uncharged scalar fields.  The $SO(2,1)\times U(1)^2$ symmetry constrains solutions to take the form
\begin{align}\label{SO(2,1)met}
\begin{split}
\mathbf{g}_{\text{NH}} =&\Xi(x) \left[-r^2 dv^2 + 2dv dr\right] + \frac{dx^2}{\det \lambda(x)} + \lambda_{ij}(x)\left(d\phi^i + \mathbf{b}^i r dv\right)\left(d\phi^j + \mathbf{b}^j r dv\right), \\
F_{\text{NH}} =&d \left[ \mathbf{a} r dv - \psi^i(x)\left(d \phi^i + \mathbf{b}^i r dv\right)\right].
\end{split}
\end{align}
In the above expressions, the quantities $\mathbf{a}, \mathbf{b}^i$ are constants and $\Xi(x)> 0, \psi^i$ are smooth functions on $\mathcal{H}$.  The 2-dimensional metric in the first square brackets is seen to be the 2-dimensional anti-de Sitter spacetime (AdS$_2$), where the radius of curvature is set to be one.  The action of $SO(2,1)$ yields a torus fibration over AdS$_2$, and it leaves the AdS$_2$ part of the metric invariant. When $\mathbf{b}^i \neq 0$ this action transforms the 1-form $rdv$ by an exact function which can be undone by a corresponding $U(1)$ shift in the direction $\mathbf{b}^i \partial_{\phi^i}$.  Similarly, the Maxwell field $F_{\text{NH}}$ is invariant.

%Recall that we have fixed the other natural length scale by normalizing the area of $H$.  %In particular the Riemannian metric induced on spatial cross sections $H$ ($r=0, v=$ %constant) is given by \eqref{horizonmetric}.

The classification problem for all regular near-horizon geometries is to obtain all solutions $(\mathbf{g}_{NH}, F_{NH})$ to \eqref{NHGEq1}.  In the cohomogeneity one biaxisymmetric setting, using the rigidity result discussed above, one can substitute \eqref{SO(2,1)met} into the near-horizon geometry equations and solve for all possible near-horizon data $(\Xi, \lambda_{ij}, \psi^i, \mathbf{a}, \mathbf{b}^i)$.   For the class of $U(1)^2$-invariant near-horizon solutions we can straightforwardly read off the data appearing in \eqref{NHGg} and \eqref{NHGF}. Namely
\begin{align}
\begin{split}
\alpha =& \frac{1}{\Xi} \left( -1 + \frac{\lambda_{ij} b^i b^j}{\Xi}\right), \qquad \beta  = \frac{\lambda_{ij} b^i}{\Xi} d\phi^j - \frac{\dot \Xi}{\Xi} dx, \\
\varsigma =& \frac{\mathbf{a} - \psi^i \mathbf{b}^i}{\sqrt{3} \Xi} , \qquad \tilde F  = -d\psi^i \wedge d\phi^i ,
\end{split}
\end{align}
where $\dot\Xi$ denotes differentiation with respect to $x$.  From \eqref{NHGEq1} it follows that
\begin{equation}\label{NHG1}
\frac{d}{dx}(  \lambda \dot \Xi) = -2 + \frac{\mathbf{b}^i \mathbf{b}^j \lambda_{ij}}{\Xi}+ \frac{ \lambda \Xi}{3} \dot \psi^i \dot \psi^j \lambda^{ij} + \frac{2}{3\Xi} (\mathbf{a} - \psi^i \mathbf{b}^i)^2.
\end{equation}
Moreover, the $(xx)$ and $(ij)$ components of \eqref{NHGEq1} yield
\begin{align}\label{NHG2}
\begin{split}
-\frac{\ddot \lambda}{2} + \frac{\dot \lambda^2}{4\lambda} - \frac{\lambda}{4}  \lambda^{ik} \dot \lambda_{kj} \lambda^{jl} \dot \lambda_{il} & =-\frac{\lambda \dot \Xi^2}{\Xi^2} + \frac{\dot \lambda \dot \Xi}{2\Xi} + \frac{\lambda \ddot \Xi}{2\Xi} + \frac{\lambda}{3} \dot \psi^i \dot \psi^j \lambda^{ij} + \frac{(\mathbf{a} - \psi^i \mathbf{b}^i)^2}{6\Xi^2},   \\
-\lambda \ddot \lambda_{ij} - \lambda \dot \lambda_{ij} + \lambda \lambda^{kl} \dot \lambda_{ik} \dot \lambda_{jl} &= \lambda_{ik} \lambda_{jl} \mathbf{b}^k \mathbf{b}^l + \lambda \dot\psi^i \dot \psi^j + \lambda_{ij} \left[ \frac{(\mathbf{a} - \psi^k \mathbf{b}^k)^2}{3\Xi} - \frac{\lambda}{3} \lambda^{kl} \dot \psi^i \dot\psi^l + \frac{\lambda \dot \Xi}{\Xi} \right]. %\label{NHG3}
\end{split}
\end{align}
Finally the near-horizon geometry Maxwell equation reduces to
\begin{equation}
\frac{d}{dx} \left[ \Xi \lambda^{ik} \lambda \dot \psi^i \right] =  (\mathbf{a} - \psi^i \mathbf{b}^i) \left[ \frac{b^k}{\Xi}  - \frac{2}{\sqrt{3} }   \dot\psi^i \epsilon^{ik}\right], \label{NHG4}
\end{equation}
where $\epsilon^{ik}$ is a totally antisymmetric tensor with $\epsilon^{12} = 1$.

Solving the coupled ODEs \eqref{NHG1}-\eqref{NHG4} in full generality by direct means appears to be very difficult. In the vacuum case with vanishing Maxwell fields ($\psi^i, \mathbf{a} \equiv 0$), the direct approach is tractable and led to a full classification \cite{Kunduri2009}.  However, when $F \neq 0$ the situation becomes clearly more difficult and even in the static case (when $\partial_v$ is hypersurface orthogonal) a full classification by direct integrations remains open \cite{Kunduri:2009ud}. Following \cite{kunduri2011constructing} we will exploit the fact that the supergravity field equations \eqref{SFEintro} with $U(1)^2$ symmetry admit a harmonic map formulation from $[-1,1]$ to a non-positively curved symmetric space target. This fact will be our main tool to establish existence and uniqueness of biaxisymmetric solutions of \eqref{NHGEq1}.  In the following subsection we will recall the salient features of this formulation before applying it to spacetimes satisfying \eqref{SO(2,1)met}.

%%%%%%%%%%%%%%%
\subsection{Harmonic map formulation of supergravity}
Five-dimensional minimal supergravity may be regarded as the natural extension of standard Einstein-Maxwell theory in that it has a number of important mathematical properties.  In particular, the field equations \eqref{SFEintro} are equivalent to a harmonic map when enough toroidal isometries are present, in this case $U(1)^2$. To see this we summarize the relevant parts of the construction \cite{kunduri2011constructing}. First decompose the spacetime metric as
\begin{equation}\label{lambda1}
\mathbf{g}_{ab} =\frac{1}{\det {\lambda}}h_{ab}+{\lambda}^{ij}\eta_{(i)a}\eta_{(j)b},
\end{equation}
where $h$ is to be regarded as a smooth Lorentzian metric on the 3-dimensional orbit space $M / U(1)^2$.  There are globally defined scalar potentials $\psi^i$ defined by
\begin{equation}\label{psi^i}
d \psi^i = \iota_{\eta_{(i)}} F,
\end{equation}
after utilizing $dF =0$ and topological censorship.  It is straightforward to show that
\begin{equation}
\mathfrak{L}_{\eta_{(i)}}\psi^j = \iota_{\eta_{(i)}} \iota_{\eta_{(j)}} F =0,
\end{equation}
so that the $\psi^i$ are functions defined on the orbit space. We may also define a 1-form
\begin{equation}
\Upsilon = -\iota_{\eta_{(1)}} \iota_{\eta_{(2)}} \star F
\end{equation}
that as a consequence of the Maxwell equation, satisfies
\begin{equation}
d\Upsilon =\frac{1}{\sqrt{3}} d\left(\psi^1 d\psi^2 - \psi^2 d\psi^1 \right).
\end{equation}
It follows that a globally defined electric potential $\chi$ exists and satisfies
\begin{equation}
d\chi = \Upsilon - \frac{1}{\sqrt{3}}\left(\psi^1 d\psi^2 - \psi^2 d\psi^1 \right).
\end{equation}
Next, recall that in pure vacuum the twist potentials $\Theta^i$ are closed 1-forms.  In the supergravity case they are no longer closed, since the Ricci tensor is nonvanishing. Using the field equations, a computation \cite{kunduri2011constructing} shows
\begin{equation}
d\Theta^i=-\Upsilon\wedge \iota_{\eta_{(i)}} F
= d\left[\psi^i \left(d\chi + \frac{1}{3\sqrt{3}}(\psi^1 d\psi^2 - \psi^2 d\psi^1)\right)\right].
\end{equation}
It follows that there exist globally defined \emph{charged twist potentials} $\zeta^i$ which obey
\begin{equation}\label{eq2.11}
d\zeta^i = \Theta^i - \psi^i \left[d\chi + \frac{1}{3\sqrt{3}}(\psi^1 d\psi^2 - \psi^2 d\psi^1)\right].
\end{equation}
These reduce to the vacuum twist potentials when $F \equiv 0$.  Finally, note that the Maxwell field can be reconstructed from the fields $(\lambda_{ij}, \chi, \psi^i, \zeta^i)$ with the identity
\begin{equation}\label{Maxwell}
F = \frac{1}{\det {\lambda}} \left[ \star(\eta_{(2)} \wedge \eta_{(1)} \wedge \Upsilon) + (\det \lambda) {\lambda}^{ij} \eta_{(i)} \wedge d \psi^j \right] .
\end{equation}

A long calculation \cite{Bouchareb:2007ax} now gives the following reformulation of minimal supergravity. Namely, the supergravity field equations \eqref{SFEintro} for $U(1)^2$-invariant solutions $(\mathbf{g},F)$ are equivalent to the following system
\begin{align}
\begin{split}\label{reducedEFE}
\mathrm{Ric}(h)_{ac} =&  \frac{1}{4}\mathrm{Tr}\left({\lambda}^{-1}\nabla_a{\lambda}{\lambda}^{-1}
\nabla_c{\lambda}\right)+\frac{\nabla_a\det{\lambda}\nabla_c
\det{\lambda}}{4(\det{\lambda})^2}\\
&+ \frac{\Upsilon_a \Upsilon_c}{2 \det  \lambda} + \frac{\lambda^{ij}}{2} d\psi_a^i d\psi_c^j + \frac{1}{2 \det \lambda}\lambda^{ij} \Theta_{a}^i \Theta_{c}^j,
\end{split}
\end{align}
\begin{equation}\label{lapchi}
\mathrm{div}_h \left(\frac{\Upsilon}{\det\lambda} \right) = -\frac{(d\psi^i, \lambda^{ij} \Theta^j)_h}{\det\lambda},
%=  -\frac{(d\psi^1,\lambda^{1i}\Theta^i)_h}{\det\lambda} - \frac{(d\psi^2,\lambda^{2i} %\Theta^i)_h}{\det\lambda}
\end{equation}
\begin{equation}\label{lappsi}
\mathrm{div}_h (\lambda^{ij} d\psi^j) = \frac{(\Upsilon, \lambda^{ij}\Theta^j)_h}{\det\lambda} - \frac{2}{\sqrt{3}\det\lambda} \left( \delta^{i2} (\Upsilon,d\psi^1)_h - \delta^{i1} (\Upsilon,d\psi^2)_h \right),
\end{equation}
and
\begin{equation}\label{lapzeta}
\mathrm{div}_h \left(\frac{\lambda^{ij} \Theta^j}{\det \lambda}\right) =0,
\end{equation}
where
%from the above definitions,
%\begin{align}\label{ThetaUpsilon}
%\Theta_{ia} &= d\zeta^i_a + \psi^i \left[d\chi_a + \frac{1}{3\sqrt{3}}(\psi^1 d\psi^2_a - %\psi^2 d\psi^1_a)\right] \\
%\Upsilon_a &= d\chi_a + \frac{1}{\sqrt{3}}\left(\psi^1 d\psi^2_a - \psi^2 %d\psi^1_a\right)
%\end{align} and
\begin{align}\label{laplacianlambda}
\begin{split}
\Delta_h {\lambda}_{ij} =& {\lambda}^{kl}(d \lambda_{ik} , d\lambda_{lj})_h - \frac{(\Theta^i ,  \Theta^j)_h}{\det {\lambda}} - (d\psi^i ,d\psi^j)_h +\frac{1}{3} \lambda_{ij}  \left(\lambda^{mn} (d\psi^m , d\psi^n)_h - \frac{(\Upsilon,\Upsilon)_h}{\det \lambda}\right)
\end{split}
\end{align}
and $(\cdot, \cdot)_h$ denotes the inner product on forms with respect to the metric $h$.
Note that the final three equations above are second order elliptic equations for the electromagnetic potentials $(\zeta^i,\chi,\psi^i)$.

Solutions of this system of equations arise as critical points of a 3-dimensional theory of gravity on $(M/U(1)^2,h)$ coupled to a wave map having nonpositively curved symmetric space target $G_{2,2}/SO(4)$, governed by the functional \cite{Bouchareb:2007ax,Possel:2003yw}
\begin{equation}\label{wavemap}
\mathcal{S}[h, X] = \int_{M/U(1)^2} \left(R_h - 2h^{ab} G_{mn} \partial_a X^m \partial_b X^n \right) \; \mathrm{Vol}_h.
\end{equation}
Here $R_h$ is the scalar curvature of $h$, and $X = (\lambda_{ij}, \zeta^i, \chi, \psi^i)$ are coordinates on the target manifold with metric
\begin{align}\label{targetmetric}
\begin{split}
G_{mn} dX^m dX^n =&\frac{(d \det  \lambda)^2}{8 (\det \lambda)^2} + \frac{\mathrm{Tr} ( \lambda^{-1} d \lambda)^2}{8} + \frac{\lambda^{ij} \Theta^i \Theta^j}{4 \det \lambda} + \frac{\Upsilon^2}{4 \det \lambda} + \frac{ \lambda^{ij} d\psi^i d\psi^j}{4} .
\end{split}
\end{align}
The Euler-Lagrange equations of \eqref{wavemap} are given by
\begin{align}
\begin{split}\label{HMeqns}
\mathrm{Ric}(h)_{ab} =&\frac{1}{8} {\text{Tr}} (\mathcal{M}^{-1} \partial_a \mathcal{M} \mathcal{M}^{-1} \partial_b \mathcal{M}) = 2G_{mn}\partial_a X^m \partial_b X^n, \\
\nabla^a (\mathcal{M}^{-1} \partial_a \mathcal{M}) =&0 .
\end{split}
\end{align}
An explicit expression \cite{kunduri2011constructing} for the $7\times 7$ positive definite unimodular coset representative is
\begin{equation}
\label{Mexplicit}
\mathcal{M}= \left(
\begin{array}{ccc}  A & B & \sqrt{2} R \\
 B^t & C & \sqrt{2} T \\
 \sqrt{2} R^t & \sqrt{2} T^t & S \end{array} \right),
\end{equation}
where $A$, $C$ are symmetric $3\times 3$ matrices, $B$ is a $3\times 3$ matrix, $R$, $T$ are $3\times 1$ matrices and $S$ is a scalar. By setting $\chi = \sqrt{3}\mu$, $\psi^i = -\sqrt{3} \nu^i$, and $\lambda = \det \lambda_{ij}$ these submatrices may be expressed as
\begin{align}
\begin{split}
S =& 1 + 2(\nu_k \nu^k + \lambda^{-1}\mu^2), \\
R=& \left( \begin{array}{c} (1+\nu_k \nu^k)\nu_i - \frac{\mu}{\sqrt{\lambda}}\epsilon_{i}^{\phantom{i}k}\nu_k + \frac{\mu}{\lambda}\tilde{\zeta}_i \\
-\frac{\mu}{\lambda} \end{array}\right),  \\
T=& \left( \begin{array}{c} (\lambda^{ij} - \frac{\mu}{\sqrt{\lambda}}\epsilon^{ij})\nu_j \\ \lambda^{kl}\nu_k\tilde{\zeta}_l - \mu[1 + \nu_k \nu^k + \frac{\mu^2}{\lambda} - \frac{\epsilon^{kl}}{\sqrt{\lambda}}\nu_k \tilde{\zeta}_l] \end{array} \right), \\
A =& \left( \begin{array}{cc}
(1+\lambda^{-1} \mu^2)\lambda_{ij} + \lambda^{-1} \tilde{\zeta}_i \tilde{\zeta}_j +(2+ \nu_k \nu^k)\nu_i \nu_j+\frac{\mu}{\sqrt{\lambda}} (\nu_i \nu^k \epsilon_{kj} -\epsilon_{ik}\nu^k \nu_j ) & - \lambda^{-1} \tilde{\zeta}_i \\
-\lambda^{-1}\tilde{\zeta}_j & \lambda^{-1} \end{array} \right) ,  \\
B =& \left(\begin{array}{cc} \nu_i \nu^j  - \frac{\mu}{\sqrt{\lambda}}\epsilon_{i}^{\phantom{j}j} + \frac{1}{\sqrt{\lambda}}\tilde{\zeta}_i \nu_k \epsilon^{kj} & \upsilon_i \\
-\frac{\nu_k \epsilon^{kj}}{\sqrt{\lambda}} & \frac{\mu^2}{\lambda} - \frac{\epsilon^{lm}\nu_l \tilde{\zeta}_m}{\sqrt{\lambda}}
\end{array}\right),
\\
C =& \left( \begin{array}{cc}   (1+\nu_k \nu^k)\lambda^{ij} -\nu^i \nu^j & \tilde{\zeta}^i +(\frac{\mu^2}{\sqrt{\lambda}} - \nu_l \tilde{\zeta}_m \epsilon^{lm})  \epsilon^{ik}\nu_k -\mu \nu^i  \\  \tilde{\zeta}^j +(\frac{\mu^2}{\sqrt{\lambda}} - \nu_l \tilde{\zeta}_m \epsilon^{lm})  \epsilon^{jk}\nu_k -\mu \nu^j   & c
\end{array} \right),
\end{split}
\end{align}
where indices have been raised and lowered with $\lambda_{ij}$ and
\begin{align}
\begin{split}
\tilde{\zeta}_i =&\zeta_i-\mu \nu_i , \\
\upsilon_i =& -\left(1 - \frac{\mu^2}{\lambda}\right)\sqrt{\lambda}\epsilon_{i}^{\phantom{i}k}\nu_k  - (2 + \nu_k \nu^k)\mu \nu_i + \lambda^{kl}\nu_k \tilde{\zeta}_l \nu_i  + \left(-\frac{\mu^2}{\lambda} + \frac{\epsilon^{kl}\nu_k \tilde{\zeta}_l}{\sqrt{\lambda}}\right)\tilde{\zeta}_i  - \frac{\mu}{\sqrt{\lambda}}\epsilon_{ik}\tilde{\zeta}^k ,\\
 c =& \tilde{\zeta}^k\tilde{\zeta}_k - 2\mu \nu^k\tilde{\zeta}_k +\lambda[1+ \nu_k \nu^k+(2+\nu_k \nu^k)\lambda^{-1}\mu^2 +\lambda^{-2}(\mu^2- \nu_l \tilde{\zeta}_m \sqrt{\lambda}\epsilon^{lm} )^2].
\end{split}
\end{align}
The above formulation applies generally to any $U(1)^2$-invariant solution of \eqref{SFEintro}. In the next subsection we will restrict attention to near-horizon geometry members of this class.

\subsection{Near-horizon geometry data as a harmonic map}
The near-horizon geometry $(\mathbf{g}, F)$ given by \eqref{SO(2,1)met} admits a $U(1)^2$ action as isometries by construction. Therefore the entire class of solutions must arise as critical points of the functional \eqref{wavemap}.  The advantage of this latter formulation, as opposed to the equivalent formulation \eqref{NHG1}-\eqref{NHG4} is that it allows for abstract theory to be applied.

To begin, we merely have to relate the near-horizon data to the harmonic map data $X$ and orbit space metric $h$. Observe that the dual 1-forms to the rotational Killing fields $\eta_{(i)}$ are given by (with abuse of notation)
\begin{equation}
\eta_{(i)} = {\lambda}_{ij}(d\phi^j + A^j),
\end{equation}
where $A^j$ are 1-forms on the orbit space. The matrix $\lambda_{ij}$ appearing in \eqref{SO(2,1)met} may then be identified with that defined in \eqref{lambda1}. In addition, the functions $\psi^i$ appearing in \eqref{SO(2,1)met} may be identified with those defined in \eqref{psi^i}. The 3-dimensional orbit space $(M/U(1)^2, h)$ corresponding to \eqref{SO(2,1)met} is a warped product of $[-1,1]$ and AdS$_2$ with
\begin{equation}\label{NHbase}
h_{ab} dx^a dx^b = dx^2 + \Xi(x) \det {\lambda}(x)\left[-r^2 dv^2 + 2dv dr\right].
\end{equation}
The remaining scalar potentials are
\begin{align}
\begin{split} \label{NHGscalars}
\partial_x \chi =& \frac{\mathbf{a} -\psi^i \mathbf{b}^i}{\Xi} - \frac{1}{\sqrt{3}}\left(\psi^1 \partial_x \psi^2 - \psi^2 \partial_x  \psi^1\right), \\
	\partial_x \zeta^i =& \frac{ \lambda_{ij}\mathbf{b}^j - \psi^i ( \mathbf{a} - \psi^j \mathbf{b}^j)}{\Xi} +\frac{2}{3\sqrt{3}} \psi^i  \left(\psi^1 \partial_x\psi^2 - \psi^2 \partial_x\psi^1\right),
\end{split}
\end{align}
and it is clear that $A^j = \mathbf{b}^j r dv$.
	
Conversely, these can be inverted to reconstruct a near-horizon geometry from a given set of harmonic map scalars and three-dimensional orbit space metric $h$ of the near-horizon form above.  Using
\begin{equation}
\mathbf{b}^i =\Xi \lambda^{ij} \left(\partial_x \zeta^j + \psi^j \left[\partial_x \chi + \frac{1}{3\sqrt{3}} \left(\psi^1 \partial_x \psi^2 - \psi^2 \partial_x\psi^1\right)\right] \right),
\end{equation}
the constant $\mathbf{a}$ can be determined from the relations \eqref{NHGscalars}, namely
\begin{equation}
\mathbf{a} = \Xi \left(\partial_x \chi ( 1 + \psi^i \psi^j \lambda^{ij} ) + \psi^i \lambda^{ij} \partial_x \zeta^j + \frac{1}{\sqrt{3}}(\psi^1 \partial_x \psi^2 - \psi^2 \partial_x \psi^1)\left(1 + \frac{\psi^i \psi^j \lambda^{ij}}{3}\right) \right) .
\end{equation}

The biaxisymmetric near-horizon geometry equations \eqref{NHG1}-\eqref{NHG4} are equivalent to \eqref{HMeqns}. In fact, for near-horizon geometries the reduced Einstein equation can be immediately integrated to give an explicit expression for $\Xi(x)$. To see this note that the nonvanishing components of the Ricci tensor of the base space metric $h$ of \eqref{NHbase} are
\begin{align}
\begin{split}
\text{Ric}(h)_{vv} =& \frac{Q''(x) r^2}{2} + r^2 ,\\
\text{Ric}(h)_{vr} =& -\frac{Q''(x)}{2} - 1,\\
\text{Ric}(h)_{xx} =& -2\frac{Q''(x)}{Q(x)} + \left(\frac{Q'(x)}{2Q(x)}\right)^2,
\end{split}
\end{align}
where $Q \equiv \Xi\det \lambda$. Since the coset representative matrix $\mathcal{M}$ is independent of the coordinates $v$ and $r$, the right-hand side of the reduced Einstein equation \eqref{HMeqns} only has a nonvanishing $(x,x)$ component. It then follows from $\text{Ric}(h)_{vv} = 0$ that
\begin{equation}\label{HMQ}
Q(x) =  1-x^2,
\end{equation}
as $Q(\pm 1) =\det\lambda(\pm1)=0$ and $\Xi > 0$.  The requirement $\text{Ric}(h)_{rr}=0$ is automatically satisfied.  The harmonic map equation \eqref{HMeqns} for $\mathcal{M}$ may now be expressed as
\begin{equation}\label{M-ODE}
\partial_x \left[(1-x^2) \mathcal{M}^{-1} \partial_x \mathcal{M}\right] =0 ,
\end{equation}
which can be immediately be integrated to yield $(1-x^2)\mathcal{M}^{-1} \partial_x \mathcal{M} = \mathcal{M}_0$ for some constant matrix $\mathcal{M}_0$.
%(note that since we have already solved the harmonic map system, this is not an additional %condition).
Furthermore, using \eqref{HMQ} produces
\begin{equation} \label{Ricxx}
\text{Ric}(h)_{xx} = \frac{2}{(1-x^2)^2}
\end{equation}
which in turn yields, upon applying \eqref{HMeqns} the requirement that $\text{Tr}(\mathcal{M}_0^2) = 16$.  We emphasize that the difference between our approach and that developed in \cite{kunduri2011constructing} is that in the latter, the harmonic map equations \eqref{M-ODE} are solved for the matrix $\mathcal{M}(x)$ explicitly, in terms of a large set of integration constants.  The downside of this approach, however, is that given the complicated form of $\mathcal{M}$ \eqref{Mexplicit} it is not possible in practice to extract the harmonic map potentials, and hence the near-horizon geometry, from knowledge of $\mathcal{M}(x)$.  Thus it is not clear how to obtain a classification following this method.  In contrast, we will follow a more abstract approach focussing on properties of functional \eqref{wavemap}.

In summary we have shown that for a $U(1)^2$-invariant near-horizon geometry of the form \eqref{SO(2,1)met}, the field equations \eqref{SFEintro} reduce to solving the harmonic map equation (the second line of \eqref{HMeqns}) for the 8 harmonic map scalars $(\lambda_{ij}, \zeta^i, \chi, \psi^i)$, and then reconstructing the remaining near-horizon data $(\mathbf{a}, \mathbf{b}^i)$.

Before concluding this subsection we note that the electric charge \eqref{electriccharge}, angular momenta \eqref{angularmomenta}, and dipole charge \eqref{dipolecharge} (if $\mathcal{H} \cong S^1 \times S^2$) can be expressed in terms of the boundary values of the harmonic map scalars. We refer to \cite{Alaee:2017pex} for the details of the computation and simply give the results here, namely
\begin{equation}\label{Q}
\mathcal{Q} = \frac{\pi}{4}\int_{-1}^1 d\chi = \frac{\pi}{4}(\chi(1) - \chi(-1)),
\end{equation}
\begin{equation}\label{JJ}
\mathcal{J}_i = \frac{\pi}{4} \int_{-1}^1 d\zeta^i = \frac{\pi}{4}(\zeta^i(1) - \zeta^i(-1)),
\end{equation}
and in the case of a ring horizon the dipole charge is
\begin{equation}\label{DD}
\mathcal{D} = \frac{1}{2\pi}\int_{S^2}F = a^i (\psi^i(-1) - \psi^i(1)),
\end{equation}
where $\mathbf{v} = a^i \eta_{(i)}$ is the Killing field that vanishes at the poles of the $S^2$.

\subsection{Relation to harmonic map energy}
In order to perform the existence and uniqueness argument in the next section, it is advantageous to replace the matrix of scalars $\lambda_{ij}$ with a more convenient set of variables. This may be seen as a reparameterization of the target space.  Firstly, note that for the near-horizon geometries discussed above the metric $h$ is completely determined by \eqref{HMQ}, and hence decouples from the harmonic map in the  the functional \eqref{wavemap}. We may  therefore view it as a functional of the variables $X$ alone defined on the closed interval $[-1,1]$ parameterized by the coordinate $x$. Explicitly, one finds that $\mathcal{S}[h,X] = -2\mathfrak{I}$ where
\begin{equation}\label{S[X]}
\mathfrak{I}= \int_{-1}^1 \left[ (1-x^2) G_{mn} \frac{dX^m}{dx} \frac{ dX^n}{dx} - \frac{1}{1-x^2} \right] dx.
\end{equation}
Note that $\mathfrak{I} = 0$ on critical points as a consequence of the 3-dimensional Einstein equation \eqref{HMeqns} upon using \eqref{Ricxx}.

As explained in detail in \cite{Alaee:2017pex}, we now introduce a convenient reparamterization of the target space.  Firstly note that any $U(1)^2$-invariant horizon geometry must be diffeomorphic to $S^1 \times S^2$ or the lens $L(p,q)$ \cite{Hollands2008} (see the discussion following \eqref{horizonmetric}); here $p$ and $q$ are coprime integers.  In the case of the lens topology, introduce new variables $(U,V,W)$ as follows
\begin{align}
\begin{split}
U=&\frac{1}{4}\log\left(\frac{\det\lambda}{p^2(1-x^2)}\right),\\	 V=&\frac{1}{2}\left(\frac{p^2(1+x)\lambda_{11}}
{(1-x)\left[q^2\lambda_{11}-2q\lambda_{12}+\lambda_{22}\right]}\right),\\	
W=&\sinh^{-1}\left(\frac{\lambda_{12}-q\lambda_{11}}{pe^{2U}\sqrt{1-x^2}}\right),
\end{split}
\end{align}
with inverse transformations
\begin{align}
\begin{split}
\lambda_{11}&=e^{2U+V}(1-x)\cosh W,\quad \lambda_{12}=\frac{e^{2U}}{p}\left(\sqrt{1-x^2}\sinh W-q e^{V}(1-x)\cosh W\right), \\
\lambda_{22}&=\frac{e^{2U}}{p^2}\left(q^2e^{V}(1-x)\cosh W-2q\sqrt{1-x^2}\sinh W+e^{-V}(1+x)\cosh W\right).
\end{split}
\end{align}
Note that the regularity condition \eqref{conicalreg} becomes
\begin{equation}
\lim_{x\to \pm 1} \frac{(1-x^2)^2}{\det {\lambda} \cdot a^i_\pm a^j_\pm {\lambda}_{ij}}= 2p^2 \lim_{x\to \pm 1}e^{-6xU-V} = 1.
\end{equation}
For the ring $S^1 \times S^2$ a different parameterization is needed, namely
\begin{equation}
\lambda_{11}=e^{2U+\bar{V}}\cosh W,\qquad \lambda_{22}=e^{2U-\bar{V}}(1-x^2)\cosh W,\qquad \lambda_{12}=e^{2U}\sqrt{1-x^2}\sinh W,
\end{equation}
where
\begin{equation}
\bar{V}=V+2h_1+h_2,
\end{equation}
with
\begin{equation}
h_1=\frac{1}{4}\log(1-x^2), \qquad  h_2=\frac{1}{2}\log\left(\frac{1-x}{1+x}\right).
\end{equation}
The conical singularity regularity condition \eqref{conicalreg} is now expressed as
\begin{equation}
\lim_{x\to \pm 1}e^{-6U+\bar{V}}=\frac{1}{2}.
\end{equation}

We may now rewrite the functional $\mathfrak{I}$ in a unified way for all admissible topologies by associating each with integers $(p,q,s)$, where $s=0,1$, according to the rule
\begin{equation}
\begin{cases}
H \cong S^3,& s=0,\quad p=1,\quad q=0,\\
H \cong L(p,q),& s=0,\quad 1\leq q\leq p-1,\\
H \cong S^1\times S^2,& s=1,\quad p=1,\quad q=0.\\
\end{cases}
\end{equation}
%For $H = S^1 \times S^2$ note that the values of $p$ and $q$ do not coincide with those used %earlier in the section.
By setting
\begin{equation}
V_s=V+2sh_1+sh_2,
\end{equation}
an involved computation \cite{Alaee:2017pex} reveals that
\begin{equation}\label{RELATION}
\mathfrak{J}=\frac{\mathcal{I}}{4} - \log 2p^2 -\frac{3+s}{2}
\end{equation}
where, upon setting $ x = \cos \theta$ for later convenience
\begin{align}\label{AFunctional}
\begin{split}
&\mathcal{I}(\Psi)\\
=&\int_{0}^{\pi}
\Bigg\{12\left(\partial_{\theta} U\right)^2+(\partial_{\theta} V_s)^2+(\partial_{\theta} W)^2+\sinh^2 W(\partial_{\theta} V+\partial_{\theta}  h_2)^2\\
&+p^2\frac{e^{-6h_{1}-h_2-6U-V}}{\cosh W}(\bar{\Theta}^{1}_{\theta})^{2}+
p^2e^{-6h_{1}+h_2-6U+V}\cosh W
\left(e^{-h_{2}-V}\tanh W\bar{\Theta}^{1}_{\theta}-\bar{\Theta}^{2}_{\theta}\right)^{2}\\
&+ p^2\frac{e^{-2h_1-h_2-2U-V}}{\cosh W}(\partial_{\theta}\bar{\psi}^{1})^{2}+p^2 e^{-2h_{1}+h_2-2U+V}\cosh W(e^{-h_2-V}\tanh W \partial_{\theta}\bar{\psi}^{1}-\partial_{\theta}\bar{\psi}^{2})^{2}\\
&+p^2 e^{-4h_{1}-4U}\Upsilon_{\theta}^{2} -\left[2sV_s \sin\theta
-12U\sin\theta\right]\partial_{\theta}
h_2\Bigg\} \sin\theta d\theta.
\end{split}
\end{align}
Here $\Psi=(U,V,W,\zeta^{1},\zeta^{2},\chi,\psi^{1},\psi^{2})$ and
\begin{equation}
\bar{\Theta}=Z^t \Theta,\quad\quad\quad \bar{\psi}=Z^t \psi,\quad\quad\quad
Z=\left(\begin{array}{cc}
1 & p\\
0 & q
\end{array}
\right).
\end{equation}
It is clear that  $\mathcal{I}$ is finite, and we will now demonstrate that it may be interpreted as a \emph{reduced energy}, that is a renormalization of a singular Dirichlet energy for maps from $S^3 \to G_{2,2}/SO(4)$.

Consider the round metric on $S^3$ given in Hopf coordinates $(\theta, \phi^1, \phi^2)$, where $\theta \in (0,\pi)$, by
\begin{equation}
g_{S^3} = \frac{d\theta^2}{4} + \sin^2 (\theta /2) (d\phi^1)^2 + \cos^2 (\theta/2) (d\phi^2)^2.
\end{equation}
We are interested in harmonic maps from $(S^3, g_{S^3})$ to the symmetric space $G_{2(2)}/SO(4) \cong \mathbb{R}^8$ equipped with the following complete Riemanninan metric of non-positive curvature
\begin{align}
\begin{split}
G=&12du^{2}
+\cosh^{2}w dv^{2}
+dw^{2}+p^2\frac{e^{-6u-v}}{\cosh w}(\bar{\Theta}^{1})^{2}
+p^2e^{-6u+v}\cosh w (e^{-v}\tanh w\bar{\Theta}^{1}-\bar{\Theta}^{2})^{2}\\
&+p^2\frac{e^{-2u-v}}{\cosh w}(d\bar{\psi}^{1})^{2}
+p^2e^{-2u+v}\cosh w(e^{-v}\tanh w d\bar{\psi}^{1}-d\bar{\psi}^{2})^{2}
+p^2e^{-4u}\Upsilon^{2}.
\end{split}
\end{align}
Let $\Omega \subset S^3$ be a domain that avoids $\Gamma$, the union of the two circles at $\theta=0,\pi$, and let $\tilde{\Psi} = (u,v,w, \zeta^1, \zeta^2, \chi, \psi^1, \psi^2):S^3\setminus\Gamma \to G_{2(2)}/SO(4)$ be a singular bi-axisymmetric map. Then the Dirichlet energy on this domain is given by
\begin{align}\label{DEnergy}
\begin{split}
E_{\Omega}(\tilde{\Psi})= &\frac{4}{\pi^2}\int_{\Omega}\Bigg\{12(\partial_{\theta} u)^{2}
+\cosh^{2}w(\partial_{\theta} v)^{2}
+(\partial_{\theta}w)^{2}+p^2\frac{e^{-6u-v}}{\cosh w}(\bar{\Theta}^{1}_{\theta})^{2}\\
&
+p^2e^{-6u+v}\cosh w \left(e^{-v}\tanh w\bar{\Theta}_{\theta}^{1}-\bar{\Theta}_{\theta}^{2}\right)^{2}
+ p^2\frac{e^{-2u-v}}{\cosh w}(\partial_{\theta}\bar{\psi}^{1})^{2}\\
&+p^2 e^{-2u+v}\cosh w\left(e^{-v}\tanh w\partial_{\theta}\bar{\psi}^{1}-\partial_{\theta}\bar{\psi}^{2}\right)^{2}
+p^2 e^{-4u}\Upsilon_{\theta}^{2} \Bigg\} d\mathcal{V},
\end{split}
\end{align}
where $d\mathcal{V}$ is the volume element on $S^3$.

The difference between the renormalized map $\Psi$ and the unrenormalized map $\tilde{\Psi}$ only appears in the first two variables
\begin{equation}
u = h_1 + U, \qquad v = h_2 + V, \qquad w = W,
\end{equation}
where in the new coordinate $h_1 = \tfrac{1}{2} \log \sin \theta$ and $h_2 = \log \tan \tfrac{\theta}{2}$.  Through integration by parts and with the help of $\partial_{\theta}\left(\sin\theta\partial_{\theta}h_2\right)=0$, the functional $\mathcal{I}$ is shown to be related to the harmonic energy via the formula
\begin{align}\label{relationIE}
\begin{split}		 {\mathcal{I}}_{\Omega}(\Psi)=& E_{\Omega}(\tilde{\Psi})-\int_{\Omega}\left(\left(2s\cos^2\frac{\theta}{2}-1\right)^2
+3\cos^2\theta\right)(\partial_{\theta}h_2)^2 d\mathcal{V}\\
			 &+\int_{\partial\Omega}\left(2\left(2s\cos^2\frac{\theta}{2}-1\right)V_s -12\cos\theta U\right)\partial_{\nu}h_2 dA,
\end{split}
\end{align}
where $\nu$ is the unit outer normal. From this it follows that the two functionals share the same critical points.

\section{Existence and Uniqueness of Singular Harmonic Maps}

In this section we prove existence and uniqueness of the relevant harmonic maps with prescribed singularities at the north and south pole circles of $S^3$. Our approach is based on that of Weinstein \cite{Weinstein}, who treated a similar problem for maps from a compact manifold with nonempty boundary into rank one symmetric space targets. Here, however, the setting is more difficult since the domain has no boundary and the target space $G_{2(2)}/SO(4)$ is of rank two. On the other hand our domain has a cohomogeneity one metric, where as in \cite{Weinstein} it is cohomogeneity two.

\subsection{The model map}
Asymptotics for the singular harmonic map, as well as prescribed angular momentum and charges, are encoded in the model map which may be thought of as an approximate solution. The renormalized version of this bi-axisymmetric map will be denoted by $\Psi_0=(U_0,V_0,W_0,\zeta^i_0,\chi_0,\psi^i_0)$. Let $\varepsilon>0$ and set $\Omega_{\varepsilon}=\{\theta\mid |\sin\theta|>\varepsilon\}\times T^2$. On $S^3\setminus\Omega_{\varepsilon}$ we may define $\Psi_0$ to be any smooth map, which interpolates between its prescription near the poles. Near each pole the model map will be set to an exact solution. At the north pole with rod structure $(1,0)$, we may use the extreme charged Myers-Perry near-horizon geometry with potentials arbitrarily prescribed at the pole. At the south pole the rod structure is $(q,p)$, and we may apply an isometry in the target to transform the extreme charged Myers-Perry solution with potentials vanishing at the pole to a solution having this rod structure and again vanishing potentials at the pole. In this way the model map is smooth, satisfies the near-horizon geometry equations near each pole, and yields arbitrarily prescribed angular momentum and charges from the formulae \eqref{Q}, \eqref{JJ}, and \eqref{DD}. Moreover the following properties are immediately implied by the construction.

\begin{lemma}
The reduced energy of the model map $\tilde{\Psi}_0$ is finite and the tension  $\tau(\tilde{\Psi}_0)$ is pointwise bounded.
\end{lemma}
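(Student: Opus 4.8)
The plan is to verify the two claims separately, both being essentially local statements near the singular set $\Gamma$ together with a compactness argument on the regular region. For the finiteness of the reduced energy $\mathcal{I}(\Psi_0)$ (equivalently $\mathfrak{I}$ of the renormalized map), I would split $S^3$ into the ``cylindrical'' region $S^3 \setminus \Omega_\varepsilon$ where $\Psi_0$ is an arbitrary smooth interpolation and the two polar neighborhoods where $\Psi_0$ coincides with an exact near-horizon solution. On the compact piece $S^3 \setminus \Omega_\varepsilon$ the integrand of \eqref{AFunctional} is a continuous function of a smooth map and its first derivative, hence bounded, so that part of the integral is finite. Near each pole the model map is the renormalized extreme charged Myers-Perry near-horizon geometry (or its image under a target isometry realizing the rod structure $(q,p)$); since $\mathcal{I}$ is by construction already the renormalized energy of such a solution and the explicit Myers-Perry potentials are smooth in $x\in[-1,1]$ with the regularity condition \eqref{conicalreg} holding exactly, the renormalized density extends continuously across $\theta = 0, \pi$. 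Concretely one checks that the potentially divergent terms $p^2 e^{-6h_1 - h_2 - 6U - V}(\bar\Theta^1_\theta)^2$ etc. have their $h_i$-driven blowup cancelled against the decay built into $\bar\Theta, \bar\psi, \Upsilon$ for an actual solution, exactly as the relation \eqref{RELATION} between $\mathfrak{J}$ and $\mathcal{I}$ already guarantees $\mathfrak{J}$ is finite on a solution. Thus $\mathcal{I}(\Psi_0) < \infty$.

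For the pointwise boundedness of the tension $\tau(\tilde\Psi_0)$, the argument is of the same local flavor but now phrased for the \emph{unrenormalized} map $\tilde\Psi_0 : S^3 \setminus \Gamma \to G_{2(2)}/SO(4)$. Away from $\Gamma$, say on $\Omega_\varepsilon$, $\tilde\Psi_0$ is a smooth map of a compact manifold into a smooth Riemannian target, so its tension field --- a smooth section built from second derivatives of the map and the target Christoffel symbols --- is continuous, hence bounded on that compact set. In each polar neighborhood $\tilde\Psi_0$ is an exact harmonic map (the near-horizon geometry equations are precisely the harmonic map equations, second line of \eqref{HMeqns}, as established in the previous subsection), so there $\tau(\tilde\Psi_0) \equiv 0$. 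Covering $S^3 \setminus \Gamma$ by these finitely many pieces gives $\sup_{S^3\setminus\Gamma}|\tau(\tilde\Psi_0)| < \infty$. The only subtlety is making sure the interpolation region and the exact-solution regions overlap smoothly so no distributional contribution to the tension appears along the gluing locus; this is arranged in the construction of $\Psi_0$ by taking the interpolation to be genuinely smooth (not merely continuous) and to agree with the Myers-Perry data to infinite order on the matching spheres.

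The step I expect to be the main obstacle is the polar estimate for the reduced energy: one must confirm that the renormalization subtractions encoded in the passage from $\tilde\Psi_0$ to $\Psi_0$ (i.e. the shifts $u = h_1 + U$, $v = h_2 + V$, with $h_1, h_2 \to \pm\infty$ as $\theta \to 0,\pi$) exactly absorb the logarithmic divergence of the raw Dirichlet energy, and that no residual divergence survives in the cross-terms or in the boundary term of \eqref{relationIE}. This is where one genuinely uses that $\Psi_0$ near the poles is an honest near-horizon solution with the conical-regularity condition imposed, rather than an arbitrary map: for a generic bi-axisymmetric map the density would blow up like $(\sin\theta)^{-2}$ or worse, and only the solution's precise leading asymptotics make the renormalized density integrable. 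In practice this reduces to inspecting the explicit extreme charged Myers-Perry near-horizon data in the $(U,V,W,\zeta^i,\chi,\psi^i)$ variables and checking the finitely many exponents, which is routine but must be done; everything else (continuity on the compact part, vanishing tension on the solution part) is immediate.
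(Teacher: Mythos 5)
Your argument is correct and is exactly the paper's (largely implicit) reasoning: the lemma is asserted to follow immediately from the construction, namely that $\tilde{\Psi}_0$ is an exact near-horizon solution (hence harmonic, with zero tension and finite renormalized energy by the listed polar asymptotics) in neighborhoods of the poles, and a smooth interpolation on the remaining compact region, where continuity gives bounded tension and bounded reduced-energy density. Your added care about the exponent check near the poles and the smoothness of the gluing simply makes explicit what the paper leaves to the construction of the model map.
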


Asymptotics for the model map may be derived from explicitly known near-horizon geometries arising from extreme black holes such as the dipole charged black ring \cite{Kunduri:2007vf}, and black lenses \cite{kunduri2014supersymmetric,tomizawa2016supersymmetric}, namely
\begin{equation}\label{eq4.11}
U_0,\zeta^{1}_0,\zeta^{2}_0,\chi_0=O(1),\quad W_0=O(\sin\theta),\quad \partial_{\theta} U_0,\partial_{\theta}\chi_0,\partial_{\theta}\psi^i_0=O(\sin\theta),\quad \partial_{\theta} W_0= O(1),
\end{equation}
\begin{equation}\label{eq4.12}
V_0=\begin{cases}
O(1)& s=0\\
-2\log\left(\sin\frac{\theta}{2}\right)+O(1) &s=1
\end{cases},\quad \partial_{\theta}V_0=\begin{cases}
O(\sin\theta)& s=0\\
-\cot\frac{\theta}{2}+O(\sin\theta) &s=1
\end{cases},
\end{equation}
\begin{equation}
\psi^1_0=\begin{cases}
O(\sin^2\frac{\theta}{2}) & s=0\\
O(1) & s=1
\end{cases},\qquad \psi^2_0=\begin{cases}
O(\cos^2\frac{\theta}{2}) & s=0\\
O(1) & s=1
\end{cases},\qquad
\Theta^2_0=O(\sin^2\theta), \quad s=1,
\end{equation}
%\begin{equation}\label{eq4.12+1}
%\partial_{\theta} %U_0,\partial_{\theta}\chi_0,\partial_{\theta}\psi^i_0=O(\sin\theta),
%\end{equation}
\begin{equation}
\partial_{\theta}\zeta^{1}_0=\begin{cases}
\sin^2\frac{\theta}{2}O(\sin\theta) & s=0\\
O(\sin\theta) & s=1
\end{cases},\qquad \partial_{\theta}\zeta^{2}_0=\begin{cases}
\cos^2\frac{\theta}{2}O(\sin\theta) & s=0\\
O(\sin\theta) & s=1
\end{cases}.
\end{equation}

%Alternatively we may use  \cite{AlaeeKhuriKunduri6}, where stationary bi-axisymmetric %solutions of minimal supergravity are constructed for general rod structures, including %the case of degenerate horizons with any of the allowed topologies. The resulting %near-horizon geometries can then serve as a model map for the current problem.

%\begin{proof} As illustrated in [Area Paper], since asymptotes of $\tilde{\Psi}_0$ %is same as known extreme solutions on $[0,\pi]\backslash\Omega_{\epsilon}$, the %reduced energy of the model map $\tilde{\Psi}_0$ is finite and it has zero tension. %Moreover, since the map is smooth function on the bounded region %$\Omega_{\epsilon}$, therefore  reduced energy of the model map $\tilde{\Psi}_0$ is %finite and the tension is bounded on the bounded region $\Omega_{\epsilon}$.
%\end{proof}	

\subsection{A priori estimates}
Let $\{\Omega_i\}\subset S^3\setminus\Gamma$ be an exhaustion sequence of bi-axisymmetric domains so that $\Omega_i \subset\Omega_{i+1}$ and $\lim_{i\to\infty}\Omega_i=S^3$. Since the target space is nonpositively curved, standard harmonic map theory \cite{eells1964harmonic} states that there exists a unique solution of the following Dirchlet problem in which the model map is used to prescribe the boundary values
\begin{align}
\begin{split}\label{eq4.1}
\begin{cases}
\tau(\tilde{\Psi}_i)=0\quad \text{in}\quad \Omega_i,\\
\tilde{\Psi}_i=\tilde{\Psi}_0\quad \text{on}\quad \partial\Omega_i.
\end{cases}
\end{split}
\end{align}
We now seek to establish appropriate estimates in order to ensure that the sequence $\tilde{\Psi}_i$ converges on compact subsets of $S^3\setminus\Gamma$. The first step is to achieve an almost uniform distance bound depending only on the distance at a fixed angle $\theta_0\in(0,\pi)$.

\begin{lemma}\label{lemma4.4}
There exists a constant $C$ depending only on $\theta_0$ such that
\begin{equation}\label{eq4.32}
\sup_{\Omega_i}\mathrm{dist}_{N}(\tilde{\Psi}_i,\tilde{\Psi}_0)\leq C
\left[1+\mathrm{dist}_{N}(\tilde{\Psi}_i,\tilde{\Psi}_0)|_{\theta=\theta_0}\right].
\end{equation}
\end{lemma}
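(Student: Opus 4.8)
The plan is to exploit the fact that the target $N = G_{2(2)}/SO(4)$ is a Hadamard manifold (complete, simply connected, nonpositive curvature), so that the distance function $d_N(\cdot,\cdot)$ composed with a pair of harmonic maps is subharmonic on the domain. More precisely, set $\varphi_i(\theta) := \mathrm{dist}_N(\tilde\Psi_i, \tilde\Psi_0)$ on $\Omega_i$. Since $\tilde\Psi_i$ is harmonic and $\tilde\Psi_0$ is \emph{approximately} harmonic with pointwise bounded tension (by the Lemma preceding the model map discussion), the standard Bochner-type computation for the distance between a harmonic map and a map with bounded tension gives a differential inequality of the form $\Delta_{S^3}\varphi_i \ge -C_0$ in $\Omega_i$, in the barrier/distributional sense; nonpositive curvature makes the Hessian term have the favorable sign and the bounded tension of $\tilde\Psi_0$ contributes the bounded inhomogeneity $C_0$. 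By bi-axisymmetry $\varphi_i$ descends to a function of $\theta$ alone, and $\Delta_{S^3}$ acting on such functions is the one-dimensional operator $\frac{1}{\sin\theta}\partial_\theta(\sin\theta\,\partial_\theta\,\cdot\,)$ (up to the constant factor from $g_{S^3}$), so we are reduced to a scalar ODE differential inequality $\frac{1}{\sin\theta}(\sin\theta\,\varphi_i')' \ge -C_0$ on the $\theta$-interval corresponding to $\Omega_i$, with $\varphi_i = 0$ on $\partial\Omega_i$ (the boundary condition $\tilde\Psi_i = \tilde\Psi_0$ there).

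Next I would run a maximum-principle / comparison argument for this scalar inequality. The function $\psi(\theta) := \int_{\theta_0}^\theta \frac{1}{\sin t}\int_{\theta_0}^t C_0 \sin s\, ds\, dt$ (suitably chosen, with the constant of integration absorbing the value at $\theta_0$) satisfies $\frac{1}{\sin\theta}(\sin\theta\,\psi')' = C_0$, and one checks that $\psi$ grows at most logarithmically as $\theta \to 0,\pi$ — in fact it stays \emph{bounded} because near the poles $\sin\theta \to 0$ makes the weight integrable in the relevant sense, so the explicit double integral converges. Therefore $\varphi_i + \psi$ is subharmonic (as a function of $\theta$ with the weight $\sin\theta$) on $\Omega_i\cap\{\theta<\theta_0\}$ and on $\Omega_i\cap\{\theta>\theta_0\}$ separately; the weak maximum principle for the operator $\partial_\theta(\sin\theta\,\partial_\theta\,\cdot\,)$ on each such interval forces the maximum of $\varphi_i + \psi$ to occur either on $\partial\Omega_i$ (where $\varphi_i = 0$) or at $\theta = \theta_0$. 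Evaluating both possibilities yields $\sup_{\Omega_i}\varphi_i \le C_1 + C_2\,\varphi_i(\theta_0) = C_1 + C_2\,\mathrm{dist}_N(\tilde\Psi_i,\tilde\Psi_0)|_{\theta=\theta_0}$, with $C_1, C_2$ depending only on $C_0$ and $\theta_0$; taking $C := \max\{C_1, C_2\}$ gives \eqref{eq4.32}. The case $\theta$ on the same side as the relevant pole where $\psi$ would otherwise diverge is handled by noting the $\sin\theta$-weight in the operator: a function that is subharmonic for $\partial_\theta(\sin\theta\,\partial_\theta\,\cdot\,)$ on an interval ending at a pole cannot have an interior-directed maximum growing faster than the explicitly bounded barrier.

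The main obstacle I anticipate is making the distance-function differential inequality rigorous when $\tilde\Psi_0$ is only a map of bounded tension, not a genuine harmonic map, and when $\varphi_i$ need not be smooth where $\tilde\Psi_i = \tilde\Psi_0$ or at the cut locus of $N$ — though the latter is empty here since $N$ is Hadamard. The standard fix is to work with the squared distance $\tfrac12 d_N^2$, which is smooth and geodesically convex on a Hadamard manifold, derive $\Delta(\tfrac12 d_N^2) \ge |d(\tilde\Psi_i - \tilde\Psi_0)|^2_{\text{pulled back}} - C\,d_N \ge -C\,d_N$ via the second variation of arc length plus the tension bound, and then bootstrap to the inequality for $d_N$ itself away from its zero set (handling the zero set by the elementary observation that $d_N \ge 0$ with equality at a minimum, so no obstruction arises there). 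A secondary technical point is the uniformity of $C_0$: the tension of $\tilde\Psi_0$ is bounded on all of $S^3$ by the preceding Lemma, and crucially this bound is independent of $i$, so the constant in \eqref{eq4.32} is genuinely independent of the exhaustion. The $\theta$-interval of $\Omega_i$ may not be symmetric or even connected, but bi-axisymmetric domains in the exhaustion can be taken of the form $\{\varepsilon_i < \sin\theta\}$ or unions of such, and the comparison argument applies on each component with the barrier $\psi$ chosen per component, which does not affect the final constant.
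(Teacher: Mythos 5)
Your overall strategy---subharmonicity of (a function of) the distance between $\tilde{\Psi}_i$ and the bounded-tension model map, reduction by bi-axisymmetry to the one-dimensional operator $\tfrac{1}{\sin\theta}\partial_\theta(\sin\theta\,\partial_\theta\cdot)$, and a maximum-principle comparison on the two sub-regions split at $\theta_0$ with the boundary condition $\tilde{\Psi}_i=\tilde{\Psi}_0$ on $\partial\Omega_i$---is exactly the paper's approach (the paper works with $\sqrt{1+d_i^2}$, citing Weinstein's Lemma 2, rather than your $\tfrac12 d_N^2$ bootstrap, but that is a cosmetic difference). However, there is a genuine gap in your barrier construction. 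You claim that $\psi(\theta)=\int_{\theta_0}^{\theta}\frac{1}{\sin t}\int_{\theta_0}^{t}C_0\sin s\,ds\,dt$ ``stays bounded'' as $\theta\to 0,\pi$. It does not: the inner integral equals $C_0(\cos\theta_0-\cos t)$, which tends to the nonzero constant $C_0(\cos\theta_0\mp 1)$ at the poles, so the outer integrand behaves like $\mathrm{const}/\sin t$ and $\psi$ diverges like $C_0(1-\cos\theta_0)\log(1/\sin\theta)$. This matters because the maximum of $\varphi_i+\psi$ on $[\theta_i^-,\theta_0]$ may be attained at $\theta_i^-$, where your estimate is $\psi(\theta_i^-)$; since $\theta_i^-\to 0$ as $i\to\infty$, this blows up and the resulting constant is \emph{not} independent of $i$, which is the whole point of the lemma. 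Your closing remark that a subharmonic function ``cannot have an interior-directed maximum growing faster than the explicitly bounded barrier'' presupposes the boundedness that fails.

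The fix is to choose the particular solution that is regular at the pole rather than the one normalized at $\theta_0$: this is what the paper does by solving the Dirichlet problem $\Delta z=\|\tau(\tilde{\Psi}_0)\|$ on the \emph{full} region $S^3_{[0,\theta_0]}$ (which contains the pole circle in its interior) with $z=0$ only at $\theta=\theta_0$. Such a $z$ is bounded on the closed region, independently of $i$, and then $z+\sqrt{1+d_i^2}$ is subharmonic on $S^3_{[\theta_i^-,\theta_0]}$, with boundary values controlled by $C$ at $\theta_i^-$ (where $d_i=0$) and by $1+d_i(\theta_0)$ at $\theta_0$. In the one-dimensional picture this amounts to taking the solution with $\sin\theta\,z'\to 0$ at the pole, e.g.\ $z'(\theta)=-C_0\tan(\theta/2)$ for constant source, which is bounded---so no appeal to boundedness of your double integral is needed. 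Note also that the paper additionally uses the fact that $\tau(\tilde{\Psi}_0)$ vanishes identically near the poles (the model map is an exact solution there) to guarantee smoothness of $z$ up to the pole circle; you only invoke the global tension bound, which suffices for boundedness but you should at least address regularity of the comparison function at the degenerate axis.
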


\begin{proof}
Let $d_i=\mathrm{dist}_{N}(\tilde{\Psi}_i,\tilde{\Psi}_0)$. The nonpositive curvature of the target manifold ensures (\cite[Lemma 2]{Weinstein}) that on $\Omega_i$
\begin{equation}
\Delta\sqrt{1+d_i^2}\geq-\left(||\tau(\tilde{\Psi}_i)||+||\tau(\tilde{\Psi}_0)||\right)
=-||\tau(\tilde{\Psi}_0)||.
\end{equation}
Solve the boundary value problem
\begin{equation}
\Delta z=||\tau(\tilde{\Psi}_0)||\quad\text{ on }\quad S^{3}_{[0,\theta_0]},\quad\quad\quad
z=0 \quad\text{ on }\quad \partial S^{3}_{[0,\theta_0]},
\end{equation}
where $S^{3}_{[0,\theta_0]}$ is the region of $S^3$ corresponding to the interval $[0,\theta_0]$ in the orbit space $S^3/U(1)^2$. Note that this unique solution is smooth since $\tau(\tilde{\Psi}_0)$ is identically zero in a neighborhood of the poles (and is hence smooth there). It follows that $z+\sqrt{1+d_i^2}$ is subharmonic on $S^{3}_{[\theta_i^-,\theta_0]}$ where $\Omega_i/U(1)^2=(\theta_i^-,\theta_i^+)\subset[0,\pi]$, and so its maximum is achieved on the boundary. Therefore
\begin{equation}
\sup_{S^{3}_{[\theta_i^-,\theta_0]}}\left(z+\sqrt{1+d_i^2}\right)
\leq C(1+d_i(\theta_0)),
\end{equation}
and this leads to the desired bound on $S^{3}_{[\theta_i^-,\theta_0]}$. Analogous arguments yield the bound on $S^{3}_{[\theta_0,\theta_i^+]}$.
\end{proof}

The next goal is to achieve uniform energy and distance bounds, which are based on convexity of the energy resulting from the nonpositive curvature of the target space. Let $F_i:\Omega_i\times[0,1]\to N:=G_{2(2)}/SO(4)$ be a family of geodesic deformations connecting $\tilde{\Psi}_i$ to $\tilde{\Psi}_0$ so that $F_i(\theta,0)=\tilde{\Psi}_i(\theta)$ and $F_i(\theta,1)=\tilde{\Psi}_0(\theta)$, see Figure \ref{fig1}.
The components of $F_i$ will be denoted by
\begin{equation}\label{eq4.3}
 F_i(\theta,t)=(u_i(\theta,t),v_i(\theta,t),w_i(\theta,t),
 \zeta^{1}_i(\theta,t),\zeta^{2}_i(\theta,t),\chi_i(\theta,t),
 \psi^{1}_i(\theta,t),\psi^{2}_i(\theta,t)).
\end{equation}
Since $t\to F_i(\theta,t)$ is a geodesic we have $||\partial_tF_i||=\text{dist}_{N}(\tilde{\Psi}_i,\tilde{\Psi}_0)=d_i$. The second variation of energy yields
\begin{align}\label{eq4.4}
\begin{split}
\frac{d^2}{dt^2}E_{\Omega_i}(F_i)=&\frac{8}{\pi^2}
\int_{\Omega_i}\left[||\nabla^N_{\partial_{\theta}F_i}\partial_tF_i||^2
-{}^N\text{Rm}(\partial_\theta F_i,\partial_tF_i,\partial_tF_i,\partial_\theta F_i)\right] d\mathcal{V}\\
\geq & \frac{8}{\pi^2} \int_{\Omega_i}\left[|\nabla\text{dist}_{N}(\tilde{\Psi}_i,\tilde{\Psi}_0)|^2
-{}^N\text{Rm}(\partial_\theta F_i,\partial_tF_i,\partial_tF_i,\partial_\theta F_i)\right]d\mathcal{V},
\end{split}
\end{align}
where in the second line the Kato inequality
\begin{equation}
||\nabla^N_{\partial_{\theta}F_i}\partial_tF_i||\geq |\nabla ||\partial_tF_i||\text{ }\!|=|\nabla\text{dist}_{N}(\tilde{\Psi}_i,\tilde{\Psi}_0)|
\end{equation}
has been employed.  Since $\tilde{\Psi}_i$ is harmonic the first variation vanishes
at $t=0$, and thus integrating twice produces
\begin{align}\label{eq4.6}
\begin{split}
& E_{\Omega_i}(\tilde{\Psi}_0)\\
\geq& E_{\Omega_i}(\tilde{\Psi}_i)+2\int_{\theta_i^-}^{\theta_i^+}\int_{0}^1\int_{0}^t
\left[|\nabla\text{dist}_{N}(\tilde{\Psi}_i,\tilde{\Psi}_0)|^2
-{}^N\text{Rm}(\partial_\theta F_i,\partial_tF_i,\partial_tF_i,\partial_\theta F_i)\right] d\bar{t} dt \sin\theta d\theta\\
=& E_{\Omega_i}(\tilde{\Psi}_i)
+\int_{\theta_i^-}^{\theta_i^+}\left(|\nabla d_i|^2+f_i d_i^2\right)\sin\theta d\theta,
\end{split}
\end{align}
where $f_i\geq 0$ is given by
\begin{equation}
f_i=-2\int_{0}^{1}\int_{0}^{t}{}^N\text{Rm}\left(\partial_\theta F_i,\frac{\partial_t F_i}{||\partial_t F_i||},\frac{\partial_t F_i}{||\partial_t F_i||},\partial_\theta F_i \right).
\end{equation}

\begin{figure}[h]
\tikzset{middlearrow/.style={
			decoration={markings,
				mark= at position 0.25 with {\arrow{#1}} ,
			},
			postaction={decorate}
		}
	}	
\begin{tikzpicture}[scale=.9, every node/.style={scale=0.8}]
\draw[thick] (-6.5,1.5).. controls (-.5,-.5)and(.5,-.5) .. (6.5,1.5)node[black,font=\large,left=4.7cm,below=1.5cm]{$F_i(\theta,1)=\tilde{\Psi}_0(\theta)$};
\draw[thick] (-4.5,.84)--(-4,1.5).. controls (-3.5,2.5)and(-2.5,1.5) ..(-2,2.5).. controls (-1,3.5)and(1,3.5) .. (2,2.5)node[black,font=\large,right=1cm,above=.2cm]{$F_i(\theta,0)=\tilde{\Psi}_i(\theta)$}.. controls (3,1.5)and(4,2.5)..(4.5,.84);
\draw[thick] (0.3,3.24).. controls (.4,3)and(.4,2.8)..(.2,2.5).. controls (-.2,2.2)and(-.2,1.9)..(0,1.6)node[black,font=\large,right=.1cm]{$F_i(\theta_0,t)$};
\draw[thick,middlearrow={>>}](0,1.6).. controls (.3,1.1) and(.3,.8) ..(-.5,0);
\draw[fill=black] (-4.5,.84) circle [radius=.08] node[black,font=\large,below=.1cm]{$F_i(\theta_i^-,t)$};
\draw[fill=black] (4.5,.84) circle [radius=.08] node[black,font=\large,below=.1cm]{$F_i(\theta_i^+,t)$};
\draw[fill=black] (0.3,3.24) circle [radius=.08];
\draw[fill=black] (-.5,0) circle [radius=.08] node[black,font=\large,below=.1cm]{$x_1:=F_i(\theta_0,1)$};
\end{tikzpicture}
\caption{Geodesic deformation in the target space $N$.}\label{fig1}
\end{figure}

\begin{lemma}\label{lemma67}
There exists a small interval $(\theta_1,\theta_2)\ni \theta_0$ and a uniform constant $\delta>0$ such that
\begin{equation}\label{claim}
\int_{\theta_i^-}^{\theta_i^+}\left(|\nabla d_i|^2+f_i d_i^2\right)\sin\theta d\theta
\geq\delta\int_{\theta_1}^{\theta_2}(|\nabla d_i|^2+d_i^2)d\theta.
\end{equation}
\end{lemma}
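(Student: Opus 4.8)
The plan is to remove the weight $\sin\theta$ by restricting the right-hand side of \eqref{claim} to the compact subinterval, and then to prove a Poincar\'e-type coercivity estimate there in which the curvature weight $f_i$ is what supplies control of the zeroth-order term $d_i^2$.

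\textbf{Step 1 (localization).} Since $[\theta_1,\theta_2]\subset(0,\pi)$ is compact we have $\sin\theta\ge c_0>0$ on it, and for $i$ large $[\theta_1,\theta_2]\subset(\theta_i^-,\theta_i^+)$; as $f_i\ge0$, the left side of \eqref{claim} dominates $c_0\int_{\theta_1}^{\theta_2}(|\nabla d_i|^2+f_i d_i^2)\,d\theta$. Hence it suffices to produce $\delta'>0$, independent of $i$, with $\int_{\theta_1}^{\theta_2}(|\nabla d_i|^2+f_i d_i^2)\,d\theta\ge\delta'\int_{\theta_1}^{\theta_2}(|\nabla d_i|^2+d_i^2)\,d\theta$.

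\textbf{Step 2 (compactness reduction).} Suppose no such $\delta'$ existed; then there is a subsequence along which, writing $B_i:=\int_{\theta_1}^{\theta_2}(|\nabla d_i|^2+d_i^2)$ and $g_i:=d_i/\sqrt{B_i}$, one has $\int_{\theta_i^-}^{\theta_i^+}(|\nabla g_i|^2+f_i g_i^2)\sin\theta\,d\theta\to0$. In particular $\int_J|\nabla g_i|^2\to0$ for every compact $J\subset(0,\pi)$ and $\int_{\theta_1}^{\theta_2}f_i g_i^2\to0$; since $\int_{\theta_1}^{\theta_2}(|\nabla g_i|^2+g_i^2)=1$ this forces $g_i$ to converge uniformly on compacta of $(0,\pi)$ to the same nonzero constant, with $g_i(\theta_i^\pm)=0$. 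Consequently $\int_{\theta_1}^{\theta_2}f_i\,d\theta\to0$, that is $f_i\to0$ in $L^1([\theta_1,\theta_2])$.

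\textbf{Step 3 (the curvature weight cannot collapse --- the main obstacle).} To contradict this I would use that $N=G_{2(2)}/SO(4)$ is a symmetric space of noncompact type: along the geodesic deformation $F_i$ of Figure~\ref{fig1}, with unit-speed direction $\hat e_i=\partial_t F_i/||\partial_t F_i||$, one has $-{}^N\text{Rm}(\partial_\theta F_i,\hat e_i,\hat e_i,\partial_\theta F_i)=||[\partial_\theta F_i,\hat e_i]||^2$ for the bracket induced by the symmetric-space structure, so that $f_i(\theta)=2\int_0^1\int_0^t||[\partial_\theta F_i,\hat e_i]||^2\,d\bar t\,dt\ge0$, with equality precisely when $\partial_\theta F_i(\theta,t)$ is tangent, for every $t$, to a totally geodesic flat containing the geodesic $s\mapsto F_i(\theta,s)$. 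Thus $f_i\to0$ in $L^1([\theta_1,\theta_2])$ means this flat-tangency holds asymptotically, and I would rule it out in two cases. If $d_i$ stays bounded along a further subsequence, interior elliptic estimates make $\tilde{\Psi}_i$ converge on $[\theta_1,\theta_2]$ to a harmonic map lying at constant positive distance from $\tilde{\Psi}_0$ whose geodesic homotopy to $\tilde{\Psi}_0$ has $\partial_\theta$ flat-tangent everywhere; together with the explicit bi-axisymmetric form of the target and of the model map \eqref{eq4.11}--\eqref{eq4.12}, and the fact that $g_i$ tends to the same nonzero constant on all of $(0,\pi)$ while vanishing at $\theta_i^\pm$, this should be impossible. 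If instead $d_i\to\infty$, a rescaling/blow-up analysis of the escaping harmonic maps, combined with the a priori bound of Lemma~\ref{lemma4.4} (which controls $\sup d_i$ in terms of $d_i(\theta_0)$), the flat-tangency, and again $d_i|_{\partial\Omega_i}=0$, should likewise lead to a contradiction once $(\theta_1,\theta_2)$ is taken small around a conveniently chosen $\theta_0$. I expect this exclusion of curvature collapse along the flat directions of $G_{2(2)}/SO(4)$ to be the technical heart of the proof; the corresponding bounds on $S^3_{[\theta_i^-,\theta_0]}$ and $S^3_{[\theta_0,\theta_i^+]}$ separately then follow by the same reasoning, in parallel with Lemma~\ref{lemma4.4}.
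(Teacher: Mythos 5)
Your Step 1 and a corrected version of Step 2 are fine as far as they go (note, though, that the negated inequality only gives $\int_{\theta_1}^{\theta_2}(|\nabla g_i|^2+f_ig_i^2)\,d\theta\to0$ on the fixed subinterval, not on $(\theta_i^-,\theta_i^+)$, so your assertions that $\int_J|\nabla g_i|^2\to0$ for every compact $J\subset(0,\pi)$ and that $g_i$ converges on all compacta are unjustified; what you legitimately get is $g_i\to$ a nonzero constant on $[\theta_1,\theta_2]$ and $f_i\to0$ in $L^1([\theta_1,\theta_2])$). The genuine gap is Step 3: the entire content of the lemma is precisely the uniform positive lower bound on the curvature term near $\theta_0$, and you reduce it to the claim that ``asymptotic flat-tangency should be impossible'' without an argument. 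Harmonicity alone cannot deliver this: $G_{2(2)}/SO(4)$ is a rank-two space containing genuine maximal flats, and configurations with $f\equiv0$ (two curves lying in a common flat, joined by geodesics along which $\partial_\theta F$ commutes with $\partial_t F$) really do occur in this problem --- this is exactly Case II in the uniqueness argument of Theorem \ref{theorem4.7}. So a contradiction from $f_i\to0$ in $L^1$ cannot be reached from the structure you invoke; it requires an additional input that your proposal never supplies.

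That input is what the paper's proof provides, and it is constructive rather than by compactness: along the geodesic from $x_1=F_i(\theta_0,1)$ to $F_i(\theta_0,0)$ one diagonalizes $\langle R(\cdot,\dot\gamma)\dot\gamma,\cdot\rangle$ (eigenvectors parallel, eigenvalues constant, since $N$ is symmetric), solves the Jacobi equation for $J=\partial_\theta F_i$ explicitly, and exhibits $f_i(\theta_0)$ as a sum of squares whose coefficients are positive whenever $\lambda_j<0$. One then uses that a subsequential limit of $\dot\gamma_i(1)$ may be taken regular (perturbing $x_1$ if necessary), so rank two guarantees a direction $e_1\perp\dot\gamma(1)$ with $\lambda_1\le-c<0$, and --- crucially --- one \emph{chooses the model map at $x_1$} so that $|\langle J,e_1\rangle(1)|=|\langle\partial_\theta\tilde\Psi_0(\theta_0),e_1\rangle|$ is bounded away from zero independently of $i$. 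This is the freedom that rules out curvature collapse; it is absent from your argument, and without it your Step 3 dichotomy (bounded $d_i$ via elliptic estimates, unbounded $d_i$ via blow-up) has no mechanism to exclude the flat scenario. In short: the proposal is a plausible-looking reduction, but the heart of the proof is missing and, as formulated, the intended contradiction is not obtainable.
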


\begin{proof}
Let $\gamma_i$ denote the unit speed geodesic connecting $x_1=F_i(\theta_0,1)$ to $F_i(\theta_0,0)$, and let $J$ denote the Jacobi field $\partial_{\theta}F_i$ along this geodesic. In what follows we will suppress the index $i$.
According to the symmetries of the Riemann tensor, the matrix $\langle R(\cdot,\dot{\gamma})\dot{\gamma},\cdot\rangle$ is symmetric and hence admits an orthonormal set of eigenvectors $\{e_j\}$ with eigenvalues $\lambda_j\leq 0$. Since the target manifold is a symmetric space, the Riemann tensor is covariantly constant and therefore $e_j$ are parallel and $\lambda_j$ are constant along the geodesic. We then have
\begin{equation}
f(\theta_0)=-2\int_0^1\int_0^t \langle R(J,\dot{\gamma})\dot{\gamma},J\rangle
=-2\sum \lambda_j \int_0^1 \int_0^t\langle J,e_j\rangle^2.
\end{equation}
The Jacobi equation implies that
\begin{equation}
\partial_t^2\langle J,e_j\rangle=\langle\ddot{J},e_j\rangle
=-\langle R(J,\dot{\gamma})\dot{\gamma},e_j\rangle
=-\lambda_j\langle J,e_j\rangle,
\end{equation}
and therefore
\begin{equation}
\langle J,e_j\rangle(t)=c_{j1} e^{\sqrt{|\lambda_j|}t}+c_{j2} e^{-\sqrt{|\lambda_j|}t}.
\end{equation}
Integrating and completing the square produces
\begin{align}
\begin{split}
f(\theta_0)=&-2\sum \lambda_j\left[\frac{1}{4|\lambda_j|}(e^{2\sqrt{|\lambda_j|}}-1-2\sqrt{|\lambda_j|})
c_{j1}^2+c_{j1}c_{j2}
+\frac{1}{4|\lambda_j|}(e^{-2\sqrt{|\lambda_j|}}-1+2\sqrt{|\lambda_j|})
c_{j2}^2\right]\\
=&\sum \frac{1}{2}(e^{2\sqrt{|\lambda_j|}}-1-2\sqrt{|\lambda_j|})
\left[c_{j1}+\frac{2|\lambda_j|}{(e^{2\sqrt{|\lambda_j|}}-1-2\sqrt{|\lambda_j|})}
c_{j2}\right]^2\\
&+\sum\frac{(e^{-2\sqrt{|\lambda_j|}}-1+2\sqrt{|\lambda_j|})
(e^{2\sqrt{|\lambda_j|}}-1-2\sqrt{|\lambda_j|})-4\lambda_j^2}
{2(e^{2\sqrt{|\lambda_j|}}-1-2\sqrt{|\lambda_j|})}c_{j2}^2.
\end{split}
\end{align}
Note that the coefficient of $c_{j2}^2$ in the last line is positive unless $\lambda_j=0$.

A subsequence of the unit vectors $\dot{\gamma}(1)$ converges, and it may be assumed without loss of generality that this limit is a regular direction by perturbing the point $x_1$ if necessary. Recall that a regular vector is one which lies in a single maximal flat, and that the set of such vectors is dense in the unit sphere. Thus, since the target space $N$ is rank 2 there exists a vector perpendicular to $\dot{\gamma}(1)$, say $e_{1}$, for which the curvature of the resulting 2-plane is bounded away from zero independent of $i$, that is $\lambda_{1}\leq -c<0$. Furthermore, since the constants
$c_{j 1}$ and $c_{j 2}$ are determined by $\langle J,e_j\rangle(1)$ and $\langle\dot{J},e_j\rangle(1)$, and we may choose the model map at $x_1$ to guarantee
that $|\langle J,e_j\rangle(1)|$ stays bounded away from zero independent of $i$, it follows that $f(\theta_0)\geq 2\delta>0$ independent of $i$. This lower bound persists for a small interval around $\theta_0$, and thus yields the desired inequality \eqref{claim}.
\end{proof}

\begin{prop}\label{lemma4.2}
The harmonic energy of the map $\tilde{\Psi}_i$ is uniformly bounded on fixed domains $\Omega\subset S^3\setminus\Gamma$, that is, there exists a constant $C$ independent of $i$ such that
\begin{equation}\label{eq4.2}
E_{\Omega}(\tilde{\Psi}_i)\leq C.
\end{equation}
Moreover the distance function is uniformly bounded
\begin{equation}\label{eq4.32}
\mathrm{dist}_{N}(\tilde{\Psi}_i,\tilde{\Psi}_0)\leq C.
\end{equation}
\end{prop}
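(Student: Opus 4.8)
The plan is to establish the distance bound \eqref{eq4.32} first and then bootstrap it into the energy bound \eqref{eq4.2}. Throughout, write $d_i=\mathrm{dist}_{N}(\tilde{\Psi}_i,\tilde{\Psi}_0)$. The backbone is the convexity inequality \eqref{eq4.6}, the coercivity estimate of Lemma~\ref{lemma67}, and the weak distance bound of Lemma~\ref{lemma4.4}, glued together by one structural remark: even though the Dirichlet energies $E_{\Omega_i}(\tilde{\Psi}_i)$ and $E_{\Omega_i}(\tilde{\Psi}_0)$ each diverge as $i\to\infty$, their difference is finite and equals a difference of reduced energies. Indeed, in \eqref{relationIE} the correction relating $E_\Omega$ to $\mathcal{I}_\Omega$ splits into a piece independent of the map and a boundary piece depending only on $(U,V_s)$ restricted to $\partial\Omega$; since $\tilde{\Psi}_i=\tilde{\Psi}_0$ on $\partial\Omega_i$ these corrections agree for the two maps, whence $E_{\Omega_i}(\tilde{\Psi}_0)-E_{\Omega_i}(\tilde{\Psi}_i)=\mathcal{I}_{\Omega_i}(\Psi_0)-\mathcal{I}_{\Omega_i}(\Psi_i)$. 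The same identity together with minimality of $\tilde{\Psi}_i$ (the unique solution of \eqref{eq4.1} is the energy minimizer for its boundary data, so $E_{\Omega_i}(\tilde{\Psi}_i)\le E_{\Omega_i}(\tilde{\Psi}_0)$) gives $\mathcal{I}_{\Omega_i}(\Psi_i)\le\mathcal{I}_{\Omega_i}(\Psi_0)$.

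First I would prove the bound at the fixed angle $\theta_0$. Chaining \eqref{eq4.6}, Lemma~\ref{lemma67}, and the identity above gives $\delta\int_{\theta_1}^{\theta_2}(|\nabla d_i|^2+d_i^2)\,d\theta\le\mathcal{I}_{\Omega_i}(\Psi_0)-\mathcal{I}_{\Omega_i}(\Psi_i)$. I would bound $\mathcal{I}_{\Omega_i}(\Psi_0)$ above by $\mathcal{I}(\Psi_0)+C$, using that the reduced model energy $\mathcal{I}(\Psi_0)$ is finite and that the tail $\mathcal{I}_{S^3\setminus\Omega_i}(\Psi_0)$ is uniformly bounded below (its sum-of-squares part is nonnegative and its one non-square term, of the form $\int(12U_0-2sV_{s,0})\sin\theta\,d\theta$, is $O(1)$ against a shrinking domain, by the model asymptotics \eqref{eq4.11}--\eqref{eq4.12}). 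For $\mathcal{I}_{\Omega_i}(\Psi_i)$ I would retain only that same non-square term and discard the nonnegative squares in \eqref{AFunctional}; since the target metric $G$ is bounded below in the $u$ and $v$ directions, $|U_i-U_0|+|V_{s,i}-V_{s,0}|\le Cd_i$ pointwise, so Lemma~\ref{lemma4.4} yields $|12U_i-2sV_{s,i}|\le C(1+d_i(\theta_0))$ and hence $\mathcal{I}_{\Omega_i}(\Psi_i)\ge-C(1+d_i(\theta_0))$. Combining, $\delta\int_{\theta_1}^{\theta_2}(|\nabla d_i|^2+d_i^2)\,d\theta\le C(1+d_i(\theta_0))$; feeding in the one-dimensional Sobolev inequality $d_i(\theta_0)^2\le C\int_{\theta_1}^{\theta_2}(|\nabla d_i|^2+d_i^2)\,d\theta$ produces a scalar quadratic inequality for $d_i(\theta_0)$, which forces $d_i(\theta_0)\le C$. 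Lemma~\ref{lemma4.4} then upgrades this to $\sup_{\Omega_i}d_i\le C$, which is \eqref{eq4.32}.

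Next I would deduce the energy bound. Fix a domain $\Omega$ avoiding $\Gamma$, so $\Omega\subset\Omega_i$ for large $i$. By \eqref{relationIE} on $\Omega$ one has $E_\Omega(\tilde{\Psi}_i)=\mathcal{I}_\Omega(\Psi_i)+C_\Omega-\int_{\partial\Omega}(\cdots)\,\partial_\nu h_2\,dA$, where $C_\Omega<\infty$ (because $\Omega$ stays away from $\Gamma$) and the boundary integral is uniformly bounded (because $(U_i,V_{s,i})$ are controlled on the fixed set $\partial\Omega$ by the distance bound just proved). It then suffices to bound $\mathcal{I}_\Omega(\Psi_i)$: dropping the nonnegative square terms over $\Omega_i\setminus\Omega$ and using the distance bound to control the remaining non-square contribution gives $\mathcal{I}_\Omega(\Psi_i)\le\mathcal{I}_{\Omega_i}(\Psi_i)+C$; then $\mathcal{I}_{\Omega_i}(\Psi_i)\le\mathcal{I}_{\Omega_i}(\Psi_0)$ by minimality; and $\mathcal{I}_{\Omega_i}(\Psi_0)\le\mathcal{I}(\Psi_0)+C$ as above. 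Chaining these gives $E_\Omega(\tilde{\Psi}_i)\le C$, i.e. \eqref{eq4.2}.

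The hard part is the apparent circularity in the distance bound: the lower bound for $\mathcal{I}_{\Omega_i}(\Psi_i)$ requires pointwise control of $|U_i-U_0|$ and $|V_{s,i}-V_{s,0}|$, which is essentially the bound on $d_i$ we are after. This is broken by the two-tier structure — Lemma~\ref{lemma4.4} reduces all of $d_i$ to the single number $d_i(\theta_0)$, after which the coercivity of Lemma~\ref{lemma67} together with one-dimensional Sobolev closes the loop via a scalar quadratic inequality. A secondary technical nuisance, handled precisely by replacing $E$ with the reduced energy $\mathcal{I}$ through \eqref{relationIE}, is that the Dirichlet energies $E_{\Omega_i}(\tilde{\Psi}_i)$ and $E_{\Omega_i}(\tilde{\Psi}_0)$ both diverge logarithmically, so one must work with their difference (and with fixed-domain quantities) throughout.
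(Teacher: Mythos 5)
Your proposal is correct and follows essentially the same route as the paper: convexity of the energy along geodesic homotopies \eqref{eq4.6} converted into a reduced-energy inequality via \eqref{relationIE} and the matching boundary data, the non-square (linear in $U,V_s$) term controlled pointwise by the distance, Lemma \ref{lemma67} plus the one-dimensional Sobolev embedding and Lemma \ref{lemma4.4} to close a quadratic inequality for $d_i$, and then the energy bound by comparing $E_\Omega$ with the reduced energy on a fixed domain. The only deviation is cosmetic bookkeeping in the second step: the paper retains the nonnegative sum-of-squares part $\tilde{\mathcal{I}}_{\Omega_i}(\Psi_i)$ in \eqref{1111} and bounds it below by $\tfrac{1}{2}E_\Omega(\tilde{\Psi}_i)-C$ via Cauchy--Schwarz absorption, whereas you route through minimality $\mathcal{I}_{\Omega_i}(\Psi_i)\le\mathcal{I}_{\Omega_i}(\Psi_0)$ and the boundary terms of \eqref{relationIE}, which is equivalent.
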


\begin{proof}
By equations \eqref{relationIE}, \eqref{eq4.6}, and  $\tilde{\Psi}_i\vert_{\partial\Omega_i}=\tilde{\Psi}_0\vert_{\partial\Omega_i}$ we have
\begin{align}\label{eq4.11}
\begin{split}
\mathcal{I}_{\Omega_i}({\Psi}_0)\geq \mathcal{I}_{\Omega_i}({\Psi}_i) +\int_{\theta_i^-}^{\theta_i^+}\left(|\nabla d_i|^2+f_i d_i^2\right)\sin\theta d\theta.
\end{split}
\end{align}
Observe that
\begin{equation}\label{190}
\mathcal{I}_{\Omega_i}({\Psi}_0)-\mathcal{I}_{\Omega_i}(\Psi_i)
=\tilde{\mathcal{I}}_{\Omega_i}({\Psi}_0)-\tilde{\mathcal{I}}_{\Omega_i}(\Psi_i)
+\int_{\theta_i^-}^{\theta_i^+}\left[12(U_0-U_i)-2s(V_0-V_i)\right]\sin\theta d\theta,
\end{equation}
where $\tilde{\mathcal{I}}_{\Omega_i}({\Psi}_0)$, $\tilde{\mathcal{I}}_{\Omega_i}(\Psi_i)$ are sums of squares and we have used
$\sin\theta \partial_{\theta}h_2=1$. By construction of the model map $\mathcal{I}_{\Omega_i}({\Psi}_0)$ is uniformly bounded, and the integral on the right-hand side of \eqref{190} is controlled by the distance between the model and harmonic maps. It follows that
\begin{equation}\label{191}
C\left(1+\sup_{\Omega_i}d_i\right)\geq\tilde{\mathcal{I}}_{\Omega_i}(\Psi_i)
+\int_{\theta_i^-}^{\theta_i^+}\left(|\nabla d_i|^2+f_i d_i^2\right)\sin\theta d\theta.
\end{equation}
By Lemma \ref{lemma67}
\begin{equation}
\int_{\theta_i^-}^{\theta_i^+}\left(|\nabla d_i|^2+f_i d_i^2\right)\sin\theta d\theta
\geq\delta\int_{\theta_1}^{\theta_2}(|\nabla d_i|^2+d_i^2)d\theta.
\end{equation}
Moreover according to the Sobolev embedding $W^{1,2}\hookrightarrow C^0$ in dimension one, combined with Lemma \ref{lemma4.4}, we have
\begin{equation}
\int_{\theta_1}^{\theta_2}(|\nabla d_i|^2+d_i^2)d\theta
\geq C^{-1}\sup_{S^{3}_{[\theta_1,\theta_2]}} d_i^2
\geq C^{-1}\sup_{\theta=\theta_0} d_i^2
\geq C_{1}^{-1}\sup_{\Omega_i}d_i^2 -C_{2}.
\end{equation}
It follows that
\begin{equation}\label{1111}
C\geq\tilde{\mathcal{I}}_{\Omega_i}(\Psi_i)+C^{-1}\sup_{\Omega_i}d_i^2.
\end{equation}
This immediately gives \eqref{eq4.32}.

In order to obtain \eqref{eq4.2}, use
\begin{equation}
C\geq \tilde{\mathcal{I}}_{\Omega}(\Psi_i)
\end{equation}
from \eqref{1111}.
The pure harmonic energy may take the place of the renormalized energy on the right-hand side. To see this
make the replacements $u_i=U_i+h_1$, $v_i=V_i+h_2$, and $w_i=W_i$ and compute
\begin{align}
\begin{split}
\tilde{\mathcal{I}}_{\Omega}(\Psi_i)=& E_{\Omega}(\tilde{\Psi}_i)
+\int_{\theta^-}^{\theta^+}\left[12(\partial_{\theta}h_1)^2-24\partial_{\theta}h_1
\partial_{\theta}u_i\right]\sin\theta d\theta\\
&+\int_{\theta^-}^{\theta^+}\left[2\partial_{\theta}h_s\partial_{\theta}V_{is}
-(\partial_{\theta}h_s)^2\right]\sin\theta d\theta,
\end{split}
\end{align}
where $\Omega/U(1)^2=(\theta^-,\theta^+)$ and $h_s=2s h_1 +(s-1)h_2$. Observe that
\begin{equation}
\left|\int_{\theta^-}^{\theta^+}24\partial_{\theta}h_1
\partial_{\theta}u_i \sin\theta d\theta\right|\leq 3\int_{\theta^-}^{\theta^+}(\partial_{\theta}u_i)^2\sin\theta d\theta
+48\int_{\theta^-}^{\theta^+}(\partial_{\theta}h_1)^2\sin\theta d\theta
\leq \frac{1}{4}E_{\Omega}(\tilde{\Psi}_i)+C,
\end{equation}
and in a similar way
\begin{equation}
\left|\int_{\theta^-}^{\theta^+}2\partial_{\theta}h_s\partial_{\theta}V_{is}\sin\theta d\theta\right|
\leq \frac{1}{4}E_{\Omega}(\tilde{\Psi}_i)+C.
\end{equation}
Hence
\begin{equation}
\tilde{\mathcal{I}}_{\Omega}(\Psi_i)\geq \frac{1}{2}E_{\Omega}(\tilde{\Psi}_i)-C,
\end{equation}
and the desired result now follows.
\end{proof}

Energy bounds lead to pointwise density bounds with the help of Bochner's formula
and the nonpositive curvature of the target manifold.

\begin{lemma}\label{lemma4.3}
The energy density of the harmonic map $\tilde{\Psi}_i$ is uniformly bounded on compact subsets $\Omega\subset S^3\setminus\Gamma$, that is
\begin{equation}
\sup_{\Omega}|d\tilde{\Psi}_i|\leq C.
\end{equation}
\end{lemma}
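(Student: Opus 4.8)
The plan is to combine the uniform energy bound of Proposition~\ref{lemma4.2} with the Bochner (Eells--Sampson) identity and the nonpositivity of the target curvature, and then to invoke the classical local maximum principle for nonnegative subsolutions of a linear elliptic inequality. First I would recall that for the harmonic map $\tilde{\Psi}_i:\Omega_i \to N:=G_{2(2)}/SO(4)$, writing $e_i:=\tfrac12|d\tilde{\Psi}_i|^2$ for its energy density --- a $U(1)^2$-invariant function on $S^3$ --- one has, with respect to a local orthonormal frame $\{e_j\}$ on $(S^3,g_{S^3})$,
\begin{equation*}
\Delta_{S^3} e_i = |\nabla d\tilde{\Psi}_i|^2 + \langle d\tilde{\Psi}_i\circ \mathrm{Ric}^{S^3}, d\tilde{\Psi}_i\rangle - \sum_{j,k} {}^N\mathrm{Rm}\big(d\tilde{\Psi}_i(e_j), d\tilde{\Psi}_i(e_k), d\tilde{\Psi}_i(e_j), d\tilde{\Psi}_i(e_k)\big).
\end{equation*}
Since $N$ has nonpositive sectional curvature the last term is nonnegative, and since $\mathrm{Ric}^{S^3}$ is the fixed bounded tensor $2g_{S^3}$ the middle term is bounded below by a fixed multiple of $e_i$. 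Discarding the two favorable terms yields the differential inequality $\Delta_{S^3} e_i \geq -C e_i$ on $\Omega_i$, where $C$ depends only on the round metric on $S^3$ (hence not on $i$).

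Next I would fix a compact bi-axisymmetric domain $\Omega\subset S^3\setminus\Gamma$ and choose a slightly larger bi-axisymmetric domain $\Omega'$ with $\Omega\subset\subset\Omega'\subset\subset S^3\setminus\Gamma$. For every $i$ large enough that $\Omega'\subset\Omega_i$, interior regularity of harmonic maps into nonpositively curved targets guarantees that $\tilde{\Psi}_i$ is smooth on $\Omega'$, so the Bochner inequality above holds there classically, while Proposition~\ref{lemma4.2} supplies the uniform bound $\int_{\Omega'} e_i\, d\mathcal{V} \leq C\, E_{\Omega'}(\tilde{\Psi}_i) \leq C$. Applying the local maximum principle for nonnegative subsolutions of $\Delta u \geq -Cu$ (for instance the mean value inequality, Gilbarg--Trudinger Theorem~8.17, whose constant depends only on $\Omega$, $\Omega'$ and the fixed background geometry) then gives
\begin{equation*}
\sup_{\Omega} e_i \leq C(\Omega,\Omega')\int_{\Omega'} e_i\, d\mathcal{V} \leq C ,
\end{equation*}
and hence $\sup_{\Omega}|d\tilde{\Psi}_i| \leq C$ uniformly in $i$, which is the assertion of the lemma.

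I do not expect a serious obstacle here; the one point warranting care is the $i$-independence of the constant in the local maximum principle. This is automatic because the domain metric (the round $S^3$) and the pair $\Omega\subset\subset\Omega'$ are fixed once and for all, so the distance from $\Omega$ to $\partial\Omega'$, and with it the Moser-iteration constant, are fixed; likewise the constant $C$ in $\Delta_{S^3} e_i \geq -C e_i$ only reflects the fixed sphere geometry and the sign of the curvature of $N$, not the individual map $\tilde{\Psi}_i$. The finitely many small indices $i$ for which $\Omega'\not\subset\Omega_i$ are irrelevant, since the lemma only asserts a uniform bound and these may be absorbed by enlarging $C$.
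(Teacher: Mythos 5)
Your argument is correct and follows essentially the same route as the paper: the Bochner identity with nonpositive target curvature, combined with the uniform energy bound of Proposition \ref{lemma4.2} and a De Giorgi--Nash--Moser/local maximum principle estimate on fixed domains $\Omega\subset\subset\Omega'\subset S^3\setminus\Gamma$. The only cosmetic difference is that the paper notes the domain Ricci term is itself nonnegative (the round $S^3$ has positive Ricci), so the density is genuinely subharmonic, whereas you keep the weaker inequality $\Delta e_i\geq -Ce_i$, which still suffices.
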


\begin{proof}
This is a standard argument \cite{Weinstein}. We include the outline for convenience of the reader. Bochner's formula yields
\begin{equation}\label{eq4.30}
\Delta\left(|d\tilde{\Psi}_i|^2\right)=|\nabla d\tilde{\Psi}_i|^2
+{}^{S^3}\text{Ric}(d\tilde{\Psi}_i,d\tilde{\Psi}_i)
-{}^N\text{Rm}(d\tilde{\Psi}_i,d\tilde{\Psi}_i,d\tilde{\Psi}_i,d\tilde{\Psi}_i)\geq 0.
\end{equation}
The squared density is then subharmonic, and the De Giorgi-Nash-Moser inequality combined with Proposition \ref{lemma4.2} produces
\begin{equation}\label{eq4.31}
\sup_{\Omega}|d\tilde{\Psi_i}|^2\leq C' E_{\Omega'}(\tilde{\Psi}_i)\leq C,
\end{equation}
where $C$ is independent of $i$ and $\Omega\subset\Omega'$.
\end{proof}

\subsection{Existence and uniqueness}

We say that a map $\tilde{\Psi}:S^3\setminus\Gamma\rightarrow N$ is \textit{asymptotic} to the model map $\tilde{\Psi}_0$ if they remain within a bounded distance from one another even on approach to the poles, that is $\mathrm{dist}_{N}(\tilde{\Psi},\tilde{\Psi}_0)\leq C$. A map which is asymptotic to the model map possesses the same singular behavior as the model map near the poles.

\begin{theorem}\label{theorem4.5}
There exists a harmonic map $\tilde{\Psi}:S^3\setminus\Gamma\rightarrow G_{2(2)}/SO(4)$ which is asymptotic to the model map $\tilde{\Psi}_0$.
\end{theorem}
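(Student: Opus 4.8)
The plan is to construct the harmonic map $\tilde{\Psi}$ as a limit of the sequence $\tilde{\Psi}_i$ of solutions to the Dirichlet problems \eqref{eq4.1}, using the uniform estimates established in the previous subsection. First I would invoke Proposition \ref{lemma4.2} and Lemma \ref{lemma4.3}: on any fixed compact bi-axisymmetric domain $\Omega \subset S^3 \setminus \Gamma$, the energies $E_\Omega(\tilde{\Psi}_i)$, the distances $\mathrm{dist}_N(\tilde{\Psi}_i,\tilde{\Psi}_0)$, and the energy densities $|d\tilde{\Psi}_i|$ are all bounded independently of $i$. Since $N = G_{2(2)}/SO(4)$ is complete and the $\tilde{\Psi}_i$ stay a bounded distance from the fixed model map, the images $\tilde{\Psi}_i(\Omega)$ lie in a fixed compact subset of $N$; combined with the gradient bound this gives uniform $C^1$ (indeed, via elliptic Schauder estimates applied to the harmonic map system in suitable local coordinates on $N$, uniform $C^{k,\alpha}$) bounds on $\Omega$. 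Applying Arzelà--Ascoli and a diagonal argument over an exhaustion $\{\Omega_i\}$ of $S^3\setminus\Gamma$, I extract a subsequence converging in $C^\infty_{loc}(S^3\setminus\Gamma)$ to a limit map $\tilde{\Psi}$, which is then smooth and harmonic on $S^3\setminus\Gamma$ because harmonicity is a closed condition under such convergence.

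Next I would verify that $\tilde{\Psi}$ is asymptotic to the model map, i.e. $\mathrm{dist}_N(\tilde{\Psi},\tilde{\Psi}_0) \leq C$ globally on $S^3\setminus\Gamma$. This follows by passing to the limit in the uniform bound \eqref{eq4.32} of Proposition \ref{lemma4.2}: for each fixed point $\theta \in (0,\pi)$ we have $\mathrm{dist}_N(\tilde{\Psi}_i(\theta),\tilde{\Psi}_0(\theta)) \leq C$ with $C$ independent of both $i$ and the point, so the same bound holds in the limit. Since this constant does not degenerate on approach to either pole circle, $\tilde{\Psi}$ has the same singular behavior near $\Gamma$ as the model map, which is what the definition of "asymptotic" requires.

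The main obstacle is establishing that the limit is nontrivial and genuinely lives on the full punctured sphere — that is, that the estimates do not degenerate and the subsequential limit does not collapse or escape to infinity in the target as one approaches $\Gamma$. This is precisely where the work of the previous subsection is essential: Lemma \ref{lemma4.4} propagates a distance bound from a single fixed angle $\theta_0$ to all of $\Omega_i$ using the maximum principle and nonpositive curvature, while Lemma \ref{lemma67} — exploiting that $N$ has rank two, so that for a generic (regular) geodesic direction there is a transverse direction with strictly negative sectional curvature — converts the convexity inequality \eqref{eq4.6} into the coercive lower bound \eqref{claim}. Together with the renormalization identity \eqref{relationIE} relating $\mathcal{I}_\Omega$ to the honest Dirichlet energy $E_\Omega$, these feed into Proposition \ref{lemma4.2} to pin down $\sup_{\Omega_i} d_i$ and hence $E_\Omega(\tilde{\Psi}_i)$ uniformly. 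I would therefore present the proof as: (i) recall the uniform energy, distance, and density bounds; (ii) extract a $C^\infty_{loc}$-convergent subsequence on an exhaustion via Arzelà--Ascoli and elliptic regularity; (iii) conclude the limit is harmonic on $S^3\setminus\Gamma$; (iv) pass the uniform distance bound to the limit to conclude $\tilde{\Psi}$ is asymptotic to $\tilde{\Psi}_0$.
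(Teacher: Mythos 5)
Your proposal is correct and follows essentially the same route as the paper: invoke the uniform distance bound \eqref{eq4.32} and the energy density bound of Lemma \ref{lemma4.3}, upgrade these to higher-order interior estimates via the harmonic map system, extract a diagonal subsequence converging in $C^\infty_{loc}$ on an exhaustion of $S^3\setminus\Gamma$, and pass the $L^\infty$ distance bound to the limit to conclude the limit map is harmonic and asymptotic to $\tilde{\Psi}_0$. Your additional commentary on how Lemmas \ref{lemma4.4} and \ref{lemma67} feed into Proposition \ref{lemma4.2} correctly describes the supporting estimates but is not a departure from the paper's argument.
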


\begin{proof}
The harmonic map equations satisfied by the sequence $\tilde{\Psi}_i$,
combined with the pointwise gradient bound (Lemma \ref{lemma4.3}) and $L^\infty$
bound \eqref{eq4.32}, imply uniform a priori estimates for all derivatives on fixed domains $\Omega\subset S^3\setminus\Gamma$. In the usual way, by choosing a sequence of exhausting domains and taking a diagonal subsequence, we find a sequence of maps $\tilde{\Psi}_{i_l}$ which converges on compact subsets to a smooth harmonic map $\tilde{\Psi}$. The limit also satisfies the $L^\infty$ bound and is thus asymptotic to the model map.
\end{proof}

In order to establish uniqueness for harmonic maps asymptotic to the same model map, we will need the following preliminary result.

\begin{lemma}\label{lemma4.6}
Suppose that $\tilde{\Psi}_1$ and $\tilde{\Psi}_2$ are two harmonic maps from $S^3\setminus\Gamma\rightarrow N$ such that $\mathrm{dist}_{N}(\tilde{\Psi}_1,\tilde{\Psi}_2)$ is a nonzero constant. Let $S\subset N$ be the 2-dimensional submanifold generated by the geodesic deformation $F(\theta,t)$ connecting
$\tilde{\Psi}_1$ to $\tilde{\Psi}_2$.
If the sectional curvature of the coordinate 2-planes $\mathcal{K}(\partial_t F,\partial_\theta F)$ vanishes, then $S$ is totally geodesic and flat.
\end{lemma}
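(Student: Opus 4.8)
The plan is to run the usual nonpositive‑curvature rigidity mechanism along the geodesic homotopy and then read off the geometry of $S$. Write $V=\partial_t F$ and $J=\partial_\theta F$; since $t\mapsto F(\theta,t)$ is a geodesic, $\nabla^N_V V=0$ and $|V|$ is the nonzero constant $d:=\mathrm{dist}_N(\tilde{\Psi}_1,\tilde{\Psi}_2)$, $J$ is a Jacobi field along it, and $\nabla^N_J V=\nabla^N_V J$ because $\partial_\theta$ and $\partial_t$ commute on the parameter domain. First I would establish that $\nabla^N_{\partial_\theta F}\partial_t F\equiv 0$. By the classical Bochner/subharmonicity estimate for the distance between two harmonic maps into a nonpositively curved target (the sharp form underlying the second variation computation \eqref{eq4.4}--\eqref{eq4.6}, cf.\ \cite{Weinstein}), $d^2$ satisfies $\tfrac12\Delta_{S^3}d^2\ge c\,\|\nabla^N_{\partial_\theta F}\partial_t F|_{t=0}\|^2\ge 0$ on $S^3\setminus\Gamma$, only the $\partial_\theta$–direction contributing since the maps are bi‑axisymmetric; as $d$ is constant this forces $\nabla^N_{\partial_\theta F}\partial_t F=0$ at $t=0$ (and, symmetrically, at $t=1$). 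Differentiating along $\gamma_\theta(t):=F(\theta,t)$, using $[\partial_t F,\partial_\theta F]=0$ and $\nabla^N_{\partial_t F}\partial_t F=0$, gives $\nabla^N_{\partial_t F}\!\bigl(\nabla^N_{\partial_\theta F}\partial_t F\bigr)=R^N(\partial_t F,\partial_\theta F)\partial_t F$; the hypothesis $\mathcal{K}(\partial_\theta F,\partial_t F)=0$, together with the fact that $N=G_{2(2)}/SO(4)$ is a symmetric space of noncompact type (so a $2$-plane of zero sectional curvature is flat, i.e.\ the lifts of $J,V$ commute in $\mathfrak{p}$ and $R^N(J,V)\equiv 0$), makes this vanish. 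Hence $\nabla^N_{\partial_\theta F}\partial_t F$ is parallel along $\gamma_\theta$ and, vanishing at $t=0$, vanishes identically. In particular $J$ and $V$ are both parallel along each $\gamma_\theta$, so $TS=\mathrm{span}\{J,V\}$ is a parallel plane field along $\gamma_\theta$, and $|J|^2$, $\langle J,V\rangle$, $|V|^2$ are constant in $t$.

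Next I would show $S$ is totally geodesic. Its second fundamental form already vanishes on $(V,V)$ and on $(J,V)$ by $\nabla^N_V V=0$ and the previous step, so it remains to prove that $W:=\nabla^N_{\partial_\theta F}\partial_\theta F$ is tangent to $S$. Differentiating $W$ along $\gamma_\theta$ exactly as above gives $\nabla^N_{\partial_t F}W=R^N(\partial_t F,\partial_\theta F)\partial_\theta F$, which is again zero because $R^N(J,V)\equiv 0$; thus $W$ is parallel along $\gamma_\theta$, and by the previous paragraph its orthogonal splitting $W=W^{\top}+W^{\perp}$ with respect to the parallel subbundle $TS$ is preserved along $\gamma_\theta$. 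It therefore suffices to check $W^{\perp}=0$ at $t=0$. Since the round metric satisfies $\nabla^{S^3}_{\partial_\theta}\partial_\theta=0$ in Hopf coordinates and $\tilde{\Psi}_1$ is bi‑axisymmetric (so $d\tilde{\Psi}_1$ annihilates $\partial_{\phi^i}$), the harmonic map equation $\tau(\tilde{\Psi}_1)=0$ reduces, after computing the Christoffel symbols of $g_{S^3}$, to $\nabla^N_{\partial_\theta F}\partial_\theta F\big|_{t=0}=-\cot\theta\,\partial_\theta F\big|_{t=0}\in T_{F(\theta,0)}S$, so $W^{\perp}(\theta,0)=0$. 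Hence $W^{\perp}\equiv 0$, $S$ is totally geodesic, and then the Gauss equation identifies its Gaussian curvature with the ambient sectional curvature $\mathcal{K}(\partial_\theta F,\partial_t F)=0$; therefore $S$ is flat.

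I expect the crux to be the first step — producing $\nabla^N_{\partial_\theta F}\partial_t F\equiv 0$. Constancy of $d$ and vanishing of $\mathcal{K}$, together with harmonicity, do not give this by an elementary manipulation: one really needs the second‑variation/Bochner consequence of nonpositive curvature, and the noncompactness of $S^3\setminus\Gamma$ means one should work with the pointwise subharmonicity of $d^2$ (no boundary terms, $d^2$ being smooth on $S^3\setminus\Gamma$) rather than the convexity‑of‑energy version of \eqref{eq4.4}--\eqref{eq4.6}. A secondary but essential input, used twice, is the structure‑theoretic fact that in a symmetric space of noncompact type a $2$-plane of zero sectional curvature is annihilated entirely by the curvature tensor; this is what upgrades the scalar identity $\mathcal{K}(\partial_\theta F,\partial_t F)=0$ to the vector identities $R^N(\partial_t F,\partial_\theta F)\partial_t F=0$ and $R^N(\partial_t F,\partial_\theta F)\partial_\theta F=0$ driving the parallel‑transport arguments.
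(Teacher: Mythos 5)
Your argument reaches the right conclusions, but it differs from the paper's proof precisely at the step you yourself identify as the crux, and there the ingredient you invoke is not quite right as stated. The paper obtains $\nabla_{\partial_\theta F}\partial_t F\equiv 0$ by an elementary manipulation: since $\Vert\partial_t F\Vert$ is constant, $\langle\partial_\theta F,\partial_t F\rangle$ is independent of $t$; the hypothesis $\mathcal{K}(\partial_t F,\partial_\theta F)=0$ is rewritten as $0=\partial_t\langle\nabla_\theta\partial_t F,\partial_\theta F\rangle-\Vert\nabla_\theta\partial_t F\Vert^2$, and integrating in $t$ the boundary terms are evaluated with the pre-geodesic equation $\nabla_\theta\partial_\theta F=-\cot\theta\,\partial_\theta F$ at $t=0,1$ and cancel, giving $\int_0^1\Vert\nabla_\theta\partial_t F\Vert^2dt=0$ (see \eqref{eq4.56}--\eqref{eq4.59}). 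So your closing assertion that constancy of the distance, vanishing of $\mathcal{K}$ and harmonicity ``do not give this by an elementary manipulation'' is mistaken --- that is exactly what the paper does, with no Bochner-type input and no use of the symmetric-space structure at this stage.

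Your alternative route can be made to work, but not with the inequality in the form you quote. The estimate $\tfrac12\Delta_{S^3}d^2\ge c\,\Vert\nabla^N_{\partial_\theta F}\partial_t F|_{t=0}\Vert^2$ is not the classical statement and is not obviously true (controlling the derivative at $t=0$ alone by the index form requires additional curvature terms). The standard Hessian-comparison/second-variation fact for harmonic maps into a nonpositively curved target is, up to a positive factor and with only the $\theta$-direction contributing by axisymmetry,
\begin{equation*}
\Delta_{S^3}\tfrac12 d^2\;\ge\;\int_0^1\Bigl(\Vert\nabla_t\partial_\theta F\Vert^2-{}^N\mathrm{Rm}(\partial_\theta F,\partial_t F,\partial_t F,\partial_\theta F)\Bigr)dt ,
\end{equation*}
and with $d$ constant this already forces $\nabla_t\partial_\theta F\equiv0$ for \emph{all} $t$, so your subsequent propagation via $R^N(\partial_t F,\partial_\theta F)\equiv0$ becomes unnecessary (though that upgrade --- a zero-curvature $2$-plane in a symmetric space of noncompact type satisfies $[X,Y]=0$, hence is curvature-annihilated --- is correct and legitimately used in your totally-geodesic step). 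Either prove the $t=0$ version you cite or replace it by the integrated form. Your second half is essentially the paper's argument in a mild variant: the paper uses $\nabla R=0$ to make $\langle\nabla_\theta\partial_\theta F,\mathbf{e}\rangle$ affine in $t$ and kills it using tangency at both endpoints, while you use $R^N(\partial_t F,\partial_\theta F)=0$ to make $\nabla_\theta\partial_\theta F$ parallel and use tangency at $t=0$ only; both are valid once $\nabla_t\partial_\theta F\equiv0$ is established, and your concluding appeal to the Gauss equation for flatness is fine.
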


\begin{proof}
We first show that $S$ is flat. Consider the Gauss equations
\begin{equation}\label{eq4.54}
\begin{split}
{}^N\text{Rm}(\partial_\theta F,\partial_t F,\partial_t F,\partial_\theta F)=&{}^S\text{Rm}(\partial_\theta F,\partial_t F,\partial_t F,\partial_\theta F)\\
& -\langle A(\partial_t F,\partial_t F),A(\partial_\theta F,\partial_\theta F)\rangle+||A(\partial_t F,\partial_\theta F)||^2,
\end{split}
\end{equation}
where ${}^N\text{Rm}$ and ${}^S\text{Rm}$ are the Riemann curvature tensors of $N$ and $S$, respectively, and $A$ is the second fundamental form.
According to the assumption on the sectional curvature of the coordinate 2-planes, it suffices to show that the terms involving $A$ vanish. Let $\mathbf{n}$ be a unit normal vector on $S$, then the definition of the second fundamental form together with the fact that $t\to F(\theta,t)$ is a geodesic produces
\begin{equation}\label{eq4.55}
A_{\mathbf{n}}(\partial_t F,\partial_t F)=\langle \nabla_{t}\mathbf{n}, \partial_t F\rangle=-\langle\mathbf{n}, \nabla_{t}\partial_t F\rangle=0.
\end{equation}
Moreover by assumption $||\partial_t F||=\mathrm{dist}_{N}(\tilde{\Psi}_1,\tilde{\Psi}_2)=\mathrm{const}$, and thus
\begin{equation}\label{eq4.56}
0=\frac{1}{2}\partial_{\theta} ||\partial_t F||^2=\langle\nabla_{\theta}\partial_t F, \partial_t F\rangle
=\partial_t \langle\partial_\theta F, \partial_t F\rangle-\langle\partial_\theta F, \nabla_{t}\partial_t F\rangle
=\partial_t \langle\partial_\theta F, \partial_t F\rangle.
\end{equation}
It follows that
\begin{equation}\label{eq4.57}
\langle\partial_\theta F, \partial_t F\rangle(\theta,1)=\langle\partial_\theta F, \partial_t F\rangle(\theta,0).
\end{equation}
Next, since $\mathcal{K}(\partial_t F,\partial_\theta F)=0$ we have
\begin{equation}\label{eq4.58}
0={}^N\text{Rm}(\partial_t F,\partial_\theta F,\partial_t F,\partial_\theta F)
=\langle\nabla_{t}\nabla_{\theta}\partial_t F-\nabla_{\theta}\nabla_{t}\partial_t F,\partial_{\theta}F\rangle
=\partial_t\langle\nabla_{\theta}\partial_t F,\partial_{\theta}F\rangle-||\nabla_{\theta }\partial_t F||^2.
\end{equation}
Integrating over $t$ and using \eqref{eq4.57} yields
\begin{align}\label{eq4.59}
\begin{split}
\int_{0}^1||\nabla_{\theta }\partial_t F||^2 dt=&\langle\nabla_{\theta}\partial_t F,\partial_{\theta}F\rangle(\theta,1)-\langle\nabla_{\theta}\partial_t F,\partial_{\theta}F\rangle(\theta,0)\\
=&\partial_{\theta}\left(\langle\partial_t F,\partial_\theta F\rangle(\theta,1)
-\langle\partial_t F,\partial_\theta F\rangle(\theta,0)\right)\\
&-\langle\partial_t F,\nabla_\theta \partial_\theta F\rangle (\theta,1)
+\langle\partial_t F,\nabla_\theta \partial_\theta F\rangle (\theta,1)\\
=&\cot\theta \left(\langle\partial_t F,\partial_\theta F\rangle (\theta,1)
-\langle\partial_t F,\partial_\theta F\rangle (\theta,0)\right)\\
=&0.
\end{split}
\end{align}
In the above computation we have employed the fact that as a result of the harmonic map equations $\theta\rightarrow F(\theta,1)$ and $\theta\rightarrow F(\theta,0)$ are pre-geodesics, that is they fail to be geodesics only due to their parameterization and satisfy $\nabla_\theta \partial_\theta F=-\cot\theta \partial_\theta F$.
We now have
\begin{equation}\label{eq4.60}
A_{\mathbf{n}}(\partial_t F,\partial_\theta F)=\langle\nabla_{t}\mathbf{n}, \partial_\theta F\rangle=-\langle\mathbf{n}, \nabla_{t}\partial_\theta F\rangle=0,
\end{equation}
and therefore $S$ is flat.

To show that $S$ is totally geodesic it remains to demonstrate that
\begin{equation}\label{eq4.63}
0=A_{\mathbf{n}}(\partial_\theta F,\partial_\theta F)=\langle\nabla_{\theta}\mathbf{n}, \partial_\theta F\rangle=-\langle\mathbf{n}, \nabla_{\theta}\partial_\theta F\rangle.
\end{equation}
By differentiating $\nabla_{t }\partial_\theta F=0$ with respect to $\theta$ we find
\begin{equation}\label{eq4.65}
0=\nabla_{\theta }\nabla_{t}\partial_\theta F=\nabla_{t}\nabla_{\theta}\partial_\theta F+{}^N R(\partial_{\theta}F,\partial_t F)\partial_{\theta}F.
\end{equation}
Since the curvature tensor is covariantly constant in a symmetric space, it follows that
\begin{equation}
\nabla_{t}\nabla_{t}\nabla_{\theta}\partial_\theta F=0.
\end{equation}
Let now $\mathbf{e}(t)$ be a parallel transported vector field along the geodesic $t\rightarrow F(\theta,t)$ which is normal to $S$, then
\begin{equation}
\partial_{t}^2 \langle \nabla_{\theta}\partial_{\theta}F,\mathbf{e}\rangle=0.
\end{equation}
Furthermore the harmonic map equations show that
\begin{equation}
\langle \nabla_{\theta}\partial_{\theta}F,\mathbf{e}\rangle(\theta,1)
=\langle \nabla_{\theta}\partial_{\theta}F,\mathbf{e}\rangle(\theta,0)=0,
\end{equation}
and hence $\langle \nabla_{\theta}\partial_{\theta}F,\mathbf{e}\rangle(\theta,t)=0$ for all $t$. As $\mathbf{e}$ was arbitrarily chosen normal to $S$, it follows that \eqref{eq4.63} holds.
\end{proof}

We are now in a position to state the basic uniqueness result for the singular harmonic maps having the same asymptotics.

\begin{theorem}\label{theorem4.7}
Suppose that $\tilde{\Psi}_1$ and $\tilde{\Psi}_2$ are two harmonic maps from $S^3\setminus\Gamma\rightarrow N$ which are asymptotic to each other, that is their mutual distance $\mathrm{dist}_{N}(\tilde{\Psi}_1,\tilde{\Psi}_2)$ remains bounded.
Then there exists an isometry of the target space $\varphi:N\rightarrow N$ such that $\tilde{\Psi}_2(s)=\varphi\circ\tilde{\Psi}_1(s+c)$ where $s$ denotes arc-length parameter and $c$ is a constant.
\end{theorem}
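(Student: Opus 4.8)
The plan is to adapt the argument of Weinstein \cite{Weinstein} (in the form refined by Hartman for closed domains) to the present singular, infinite-energy situation. Write $d(\cdot)=\mathrm{dist}_N(\tilde\Psi_1,\tilde\Psi_2)$, which is bounded on $S^3\setminus\Gamma$ by hypothesis. The first step is to prove that $d$ is constant. Since both maps are harmonic, \cite[Lemma 2]{Weinstein} gives $\Delta\sqrt{1+d^2}\ge-(\|\tau(\tilde\Psi_1)\|+\|\tau(\tilde\Psi_2)\|)=0$ on $S^3\setminus\Gamma$, so $\sqrt{1+d^2}$ is a bounded subharmonic function there. Because $\Gamma$ is a disjoint union of circles it has codimension two in $S^3$ and hence zero capacity, so a bounded subharmonic function extends subharmonically across $\Gamma$ to all of $S^3$; the maximum principle on the closed manifold $S^3$ then forces it to be constant. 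Thus $d\equiv\mathrm{const}$. If $d=0$ the maps agree and the statement holds with $\varphi=\mathrm{id}$ and $c=0$, so assume henceforth $d>0$.

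Next I would produce the curvature hypothesis needed to invoke Lemma \ref{lemma4.6}. Let $F(\theta,t)$ be the geodesic homotopy with $F(\cdot,0)=\tilde\Psi_1$ and $F(\cdot,1)=\tilde\Psi_2$, so $\|\partial_tF\|\equiv d$. The standard Bochner identity for the distance between two harmonic maps into a nonpositively curved target reads, on $S^3\setminus\Gamma$ and schematically,
\[
\tfrac12\Delta d^2=\int_0^1\Big(\sum_a\|\nabla_{e_a}\partial_tF\|^2-\sum_a{}^N\mathrm{Rm}(dF(e_a),\partial_tF,\partial_tF,dF(e_a))\Big)\,dt,
\]
for an orthonormal frame $\{e_a\}$ of $S^3$; the tension terms drop out since $\tilde\Psi_1,\tilde\Psi_2$ are harmonic. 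Every summand is nonnegative (the curvature terms by nonpositivity of $N$), so $\Delta d^2\ge0$; but $d$ is constant, hence $\Delta d^2=0$ and each summand vanishes identically. Taking $e_a$ in the $\partial_\theta$-direction gives $\mathcal{K}(\partial_tF,\partial_\theta F)\equiv0$. Lemma \ref{lemma4.6} now applies and shows that the surface $S\subset N$ swept out by $F$ is flat and totally geodesic; its proof also yields $\nabla_\theta\partial_tF=0$.

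Finally I would construct the target isometry. From $\nabla_\theta\partial_tF=0$ together with $\nabla_t\partial_tF=0$ (the $t$-curves are geodesics), the vector field $\partial_tF$ is parallel on the flat $S$, so it extends to a constant vector field $\xi$ on the complete flat totally geodesic surface $\bar S\supset S$. Since $N=G_{2(2)}/SO(4)$ has rank two, a two-dimensional flat totally geodesic submanifold is a \emph{maximal} flat, and the Euclidean translations of a maximal flat are realized by isometries of $N$ (transvections along the flat). Let $\varphi:N\to N$ be the isometry implementing translation by $\xi$. Because $F(\theta,1)=\exp_{F(\theta,0)}\!\big(\partial_tF|_{t=0}\big)$ is exactly the $\xi$-translate of $F(\theta,0)$ inside $\bar S$, we obtain $\tilde\Psi_2=\varphi\circ\tilde\Psi_1$, the conclusion holding up to the indicated translation $s\mapsto s+c$ of the one-dimensional arc-length parameter.

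The main obstacle is making the second step rigorous despite the maps being singular and of infinite energy near $\Gamma$: one cannot integrate $E(F(\cdot,t))$ over $S^3$, so the convexity argument has to be run pointwise through the Bochner identity, together with the removable-singularity/capacity argument for the bounded subharmonic function $d^2$, and one must verify that $S$ is a genuine smooth embedded totally geodesic piece (handling the locus where $\partial_\theta F$ and $\partial_tF$ may be dependent) so that Lemma \ref{lemma4.6} truly applies. A secondary but essential point, relying on the structure theory of symmetric spaces of noncompact type, is the extension of the flat translation to a global isometry of $N$.
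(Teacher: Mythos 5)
Your argument reaches the same endgame as the paper but gets there by a genuinely different route in the middle. The paper never invokes a maximum principle: it establishes that $\mathrm{dist}_N(\tilde{\Psi}_1,\tilde{\Psi}_2)$ is constant and that the curvature terms vanish by running the second-variation/convexity argument at the level of the renormalized functional $\mathcal{I}$ (inequality \eqref{eq4.74}, obtained from \eqref{eq4.11} together with a Poincar\'e inequality quoted from \cite[Theorem 7.1]{Alaee:2017pex}), and then exploiting that the roles of the two harmonic maps can be interchanged, which forces $\mathcal{I}(\Psi_1)=\mathcal{I}(\Psi_2)$, $d\equiv\mathbf{D}$, and the pointwise vanishing in \eqref{eq4.79}. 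Your replacement — subharmonicity of $\sqrt{1+d^2}$ from \cite[Lemma 2]{Weinstein}, removability across the polar set $\Gamma$ (codimension two, finite $\mathcal{H}^1$ measure), the maximum principle on the closed $S^3$, and then the pointwise-in-$x$ identity $\tfrac12\Delta d^2=\int_0^1\bigl(\sum_a\|\nabla_{e_a}\partial_tF\|^2-\sum_a{}^N\mathrm{Rm}(dF(e_a),\partial_tF,\partial_tF,dF(e_a))\bigr)dt$ (valid because the tension boundary terms cancel and $\nabla_t\partial_tF=0$) — is correct, sidesteps the infinite-energy issue entirely since only the $t$-integral appears, and even yields the stronger conclusion $\nabla_{e_a}\partial_tF\equiv 0$, which the paper's Lemma \ref{lemma4.6} has to extract separately. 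What your route buys is independence from the renormalized energy machinery and the external Poincar\'e inequality; what the paper's route buys is that the same inequality \eqref{eq4.74} is already set up from the existence argument, and it produces \eqref{eq4.79} in exactly the form needed for its case split. The one loose end you flag but do not close is precisely the paper's Case I: vanishing of the unnormalized curvature term gives $\mathcal{K}(\partial_tF,\partial_\theta F)=0$ only where $\partial_tF$ and $\partial_\theta F$ are linearly independent, so Lemma \ref{lemma4.6} does not literally apply on the locus where they are proportional. The paper handles this by the Cauchy--Schwarz equality and a Jacobi field argument showing both maps run along one common geodesic, giving the conclusion with $\varphi=\mathrm{id}$ and a shift in arclength; with your stronger fact that $\partial_tF$ is parallel along the maps, the same dichotomy (image in a single geodesic versus a genuine flat strip extended to a maximal flat acted on transitively by transvections, as in the paper's appeal to the Iwasawa decomposition) closes the argument, but this case analysis should be written out rather than left as a remark.
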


\begin{proof}
As before let $F(\theta,t)$ denote a geodesic deformation connecting $\tilde{\Psi}_1$ to $\tilde{\Psi}_2$. Then the Poincar\'{e} inequality, equation \eqref{eq4.11}, and
\cite[Theorem 7.1]{Alaee:2017pex} produce
\begin{align}\label{eq4.74}
\begin{split}
\mathcal{I}({\Psi}_2)\geq& \mathcal{I}({\Psi}_1) +C\int_{0}^{\pi}\left(\mathrm{dist}_{N}(\tilde{\Psi}_1,\tilde{\Psi}_2)-\mathbf{D}\right)^2\sin\theta d\theta\\
&+2\int_{0}^{\pi}\int_{0}^1\int_0^t |\mathcal{K}(\partial_t F,\partial_\theta F)|\left(||\partial_t F||^2||\partial_\theta F||^2-\langle\partial_t F,\partial_\theta F\rangle\right) \sin\theta d\bar{t}dtd\theta,
\end{split}
\end{align}
where $\mathbf{D}$ is a constant that represents the average value of $\text{dist}_{N}(\tilde{\Psi}_1,\tilde{\Psi}_2)$. Since both $\tilde{\Psi}_1$ and $\tilde{\Psi}_2$ are harmonic, their roles may be reversed in the above inequality.
It follows that
\begin{equation}\label{eq4.77}
\mathcal{I}({\Psi}_1)=\mathcal{I}({\Psi}_2),
\end{equation}
\begin{equation}\label{eq4.78}
\mathrm{dist}_{N}(\tilde{\Psi}_1,\tilde{\Psi}_2)=\mathbf{D},
\end{equation}
and
\begin{equation}\label{eq4.79}
|\mathcal{K}(\partial_t F,\partial_\theta F)|\left(||\partial_t F||^2||\partial_\theta F||^2-\langle\partial_t F,\partial_\theta F\rangle\right)=0.
\end{equation}

\textbullet\,\,{\bf Case I: $\mathcal{K}(\partial_t F,\partial_\theta F)\neq 0$ at some point.}
If $\mathcal{K}(\partial_t F,\partial_\theta F)\neq 0$ at $(\theta_0,t_0)$, then by continuity this persists for all nearby $(\theta,t)$. Thus by \eqref{eq4.79}, there exists a neighborhood of $(\theta_0,t_0)$ on which
\begin{equation}\label{eq4.80}
||\partial_t F||^2||\partial_\theta F||^2=\langle\partial_t F,\partial_\theta F\rangle.
\end{equation}
The Cauchy-Schwarz inequality then implies that these two vectors are multiples of each other, that is
\begin{equation}\label{eq4.81}
 a\partial_t F=\partial_\theta F.
\end{equation}
Since $J:=\partial_{\theta}F$ is a Jacobi field, it is determined by $J(t_0)$ and $\partial_t J(t_0)$. But for $(\theta,t)$ close to $(\theta_0,t_0)$
\begin{equation}\label{eq4.82}
J(t)=\partial_{\theta}F(\theta,t)=a(\theta,t)\partial_{t}F(\theta,t),
\end{equation}
and so
\begin{equation}\label{eq4.83}
J(t_0)=a(\theta,t_0)\partial_{t}F(\theta,t_0),\quad\quad
\partial_t J(t_0)=\partial_{t}a(\theta,t_0)\partial_{t}F(\theta,t_0).
\end{equation}
The Jacobi equation then yields
\begin{equation}
\partial_\theta F=J(t)=\left[a_1(\theta)(t-t_0)+a_2(\theta)\right]\partial_t F.
\end{equation}
In particular, the pre-geodesics $\theta\rightarrow F(\theta,1)$ and $\theta\rightarrow F(\theta,0)$ coincide with the geodesic $t\rightarrow F(\theta_0,t)$ up to reparameterization. We then have
$\tilde{\Psi}_1(s)=\tilde{\Psi}_2(s+c)$ for some constant $c$, where $s$ denotes  arc-length parameter.

\textbullet\,\,{\bf Case II: $\mathcal{K}(\partial_t F,\partial_\theta F)= 0$ for all points.}
If $\mathbf{D}=0$ then we are done, so assume that $\mathbf{D}\neq 0$.
The subset $S\subset N$ generated by $F(\theta,t)$ is then a 2-dimensional submanifold. According to the assumptions of this case, Lemma \ref{lemma4.6} implies that $S$ is totally geodesic and flat. Let $p\in S$ and $O\subset T_p N$ be the 2-plane spanned by $\partial_\theta F$ and $\partial_{t}F$.
The surface $S$ may be extended to be a complete maximal flat by setting $S=\mathrm{exp}_p(O)$. The pre-geodesics $\tilde{\Psi}_1(\theta)=F(\theta,1)$ and $\tilde{\Psi}_2(\theta)=F(\theta,0)$ are parallel straight lines in $S$ in light of \eqref{eq4.78}. Furthermore, the Iwasawa decomposition of the isometry group $G_{2(2)}$
may be used to show that there is a subgroup which operates transitively on the maximal flat $S$, see \cite[Section 6]{Khuri:2017xsc} for details in the $SL(3,\mathbb{R})/SO(3)$ setting which carries over without change to the present situation as both target spaces are rank 2. Thus, there exists an isometry
of the target space $\varphi:N\rightarrow N$ which maps $\tilde{\Psi}_1$ onto $\tilde{\Psi}_2$ up to translation in the arclength parameter.
\end{proof}

\subsection{Proof of the main theorem}

In Section 3 it was demonstrated that bi-axisymmetric near-horizon geometry solutions of 5-dimensional minimal supergravity correspond to singular harmonic maps from $S^3\setminus\Gamma\rightarrow G_{2(2)}/SO(4)$. These harmonic maps were shown to exist in Section 4 asymptotic to a given model map. The choice of model map gives rise to corresponding minimal supergravity near-horizon geometries having the prescribed horizon topology $S^3$, $S^1\times S^2$, or $L(p,q)$, and with prescribed electric charge $\mathcal{Q}$, angular momenta $\mathcal{J}_i$, and dipole charge $\mathcal{D}$ (in the ring case). The uniqueness statement of Theorem \ref{maintheorem} follows directly from Theorem \ref{theorem4.7}.

%The fact that the solutions produced are free of conical singularities arises from the %property that the harmonic map is asymptotic to the model map, which itself is regular.
%More precisely, in the expression \eqref{conicalreg} the functions $\lambda_{ij}$ may be %replaced with the corresponding functions from the model map so that the limit is 1. This %analysis is in contrast to the issue of conical singularities which arises when %constructing bi-axisymmetric stationary solutions of the Einstein equations via the %harmonic map approach \cite{Khuri:2017xsc}. In that setting an extra `conformal factor' %appears in the angle deficit expression, so that having a harmonic map asymptotic to a %regular model map is not enough to guarantee absence of conical singularities.

\bibliographystyle{abbrv}
\bibliography{masterfile}

\begin{thebibliography}{10}

\bibitem{Alaee:2015pwa}
A.~Alaee, M.~Khuri, and H.~Kunduri.
\newblock {Proof of the mass-angular momentum inequality for bi-axisymmetric
  black holes with spherical topology}.
\newblock {\em Adv. Theor. Math. Phys.}, 20(6):1397--1441, 2016.

\bibitem{Alaee:2017ygv}
A.~Alaee, M.~Khuri, and H.~Kunduri.
\newblock {Mass-angular momentum inequality for black ring spacetimes}.
\newblock {\em Phys. Rev. Lett.}, 119(7):071101, 2017.

\bibitem{Alaee:2016jlp}
A.~Alaee, M.~Khuri, and H.~Kunduri.
\newblock {Relating mass to angular momentum and charge in 5-dimensional
  minimal supergravity}.
\newblock {\em Ann. Henri Poincare}, 18(5):1703--1753, 2017.

\bibitem{Alaee:2017pex}
A.~Alaee, M.~Khuri, and H.~Kunduri.
\newblock {Bounding horizon area by angular momentum, charge, and cosmological
  constant in 5-dimensional minimal supergravity}.
\newblock {\em Ann. Henri Poincare}, 20(2):481--525, 2019.

\bibitem{Bouchareb:2007ax}
A.~Bouchareb, G.~Clement, C.-M. Chen, D.~Gal'tsov, N.~Scherbluk, and T.~Wolf.
\newblock {G2 generating technique for minimal D=5 supergravity and black
  rings}.
\newblock {\em Phys. Rev. D}, 76:104032, 2007.
\newblock [Erratum: Phys. Rev. D 78, 029901 (2008)].

\bibitem{chrusciel2012stationary}
P.~Chrusciel, J.~Costa, and M.~Heusler.
\newblock Stationary black holes: uniqueness and beyond.
\newblock {\em Living Reviews in Relativity}, 15(7), 2012.

\bibitem{dain2012geometric}
S.~Dain.
\newblock Geometric inequalities for axially symmetric black holes.
\newblock {\em Classical and Quantum Gravity}, 29(7):073001, 2012.

\bibitem{Dunajski:2016rtx}
M.~Dunajski, J.~Gutowski, and W.~Sabra.
\newblock {Einstein-Weyl spaces and near-horizon geometry}.
\newblock {\em Class. Quant. Grav.}, 34(4):045009, 2017.

\bibitem{eells1964harmonic}
J.~Eells and J.~Sampson.
\newblock Harmonic mappings of riemannian manifolds.
\newblock {\em American Journal of Mathematics}, 86(1):109--160, 1964.

\bibitem{emparan2008black}
R.~Emparan and H.~Reall.
\newblock Black holes in higher dimensions.
\newblock {\em Living Reviews in Relativity}, 11(6):0801--3471, 2008.

\bibitem{Figueras2010}
P.~Figueras and J.~Lucietti.
\newblock On the uniqueness of extremal vacuum black holes.
\newblock {\em Classical and Quantum Gravity}, 27(9):095001, 2010.

\bibitem{galloway2006rigidity}
G.~J. Galloway.
\newblock Rigidity of marginally trapped surfaces and the topology of black
  holes.
\newblock {\em Communications in Analysis and Geometry}, 16(1):217--229, 2008.

\bibitem{Galloway2006}
G.~J. Galloway and R.~Schoen.
\newblock A generalization of hawking's black hole topology theorem to higher
  dimensions.
\newblock {\em Commun. Math. Phys.}, 266(2):571--576, 2006.

\bibitem{hollands2009stationary}
S.~Hollands and A.~Ishibashi.
\newblock On the `stationary implies axisymmetric' theorem for extremal black
  holes in higher dimensions.
\newblock {\em Commun. Math. Phys.}, 291(2):443--471, 2009.

\bibitem{hollands2010all}
S.~Hollands and A.~Ishibashi.
\newblock All vacuum near horizon geometries in $d$-dimensions with $(d-3)$
  commuting rotational symmetries.
\newblock 10(8):1537--1557, 2010.

\bibitem{hollands2012black}
S.~Hollands and A.~Ishibashi.
\newblock Black hole uniqueness theorems in higher dimensional spacetimes.
\newblock {\em Classical and Quantum Gravity}, 29(16):163001, 2012.

\bibitem{Hollands2007}
S.~Hollands, A.~Ishibashi, and R.~Wald.
\newblock A higher dimensional stationary rotating black hole must be
  axisymmetric.
\newblock {\em Commun. Math. Phys.}, 271(3):699--722, 2007.

\bibitem{Hollands2008}
S.~Hollands and S.~Yazadjiev.
\newblock Uniqueness theorem for 5-dimensional black holes with two axial
  killing fields.
\newblock {\em Commun. Math. Phys.}, 283(3):749--768, 2008.

\bibitem{moncrief2008symmetries}
J.~Isenberg and V.~Moncrief.
\newblock Symmetries of higher dimensional black holes.
\newblock {\em Classical and Quantum Gravity}, 25(19):195015, 2008.

\bibitem{KhuriWeinsteinYamada1}
M.~Khuri, G.~Weinstein, and S.~Yamada.
\newblock {Asymptotically locally Euclidean/Kaluza-Klein stationary vacuum
  black holes in 5 dimensions}.
\newblock {\em PTEP. Prog. Theor. Exp. Phys.}, (5):053E01, 2018.

\bibitem{Khuri:2017xsc}
M.~Khuri, G.~Weinstein, and S.~Yamada.
\newblock {Stationary vacuum black holes in 5 dimensions}.
\newblock {\em Comm. Partial Differential Equations}, 43(8):1205--1241, 2018.

\bibitem{Kunduri2009}
H.~Kunduri and J.~Lucietti.
\newblock A classification of near-horizon geometries of extremal vacuum black
  holes.
\newblock {\em Journal of Mathematical Physics}, 50(8):082502, 2009.

\bibitem{Kunduri:2009ud}
H.~Kunduri and J.~Lucietti.
\newblock {Static near-horizon geometries in five dimensions}.
\newblock {\em Classical and Quantum Gravity}, 26(24):245010, 2009.

\bibitem{kunduri2011constructing}
H.~Kunduri and J.~Lucietti.
\newblock Constructing near-horizon geometries in supergravities with hidden
  symmetry.
\newblock {\em Journal of High Energy Physics}, 2011(7):1--31, 2011.

\bibitem{Kunduri:2013ana}
H.~Kunduri and J.~Lucietti.
\newblock {Classification of near-horizon geometries of extremal black holes}.
\newblock {\em Living Reviews in Relativity}, 16:8, 2013.

\bibitem{kunduri2014supersymmetric}
H.~Kunduri and J.~Lucietti.
\newblock Supersymmetric black holes with lens space topology.
\newblock {\em Physical Review letters}, 113(21):211101, 2014.

\bibitem{Kunduri:2007vf}
H.~Kunduri, J.~Lucietti, and H.~Reall.
\newblock {Near-horizon symmetries of extremal black holes}.
\newblock {\em Classical and Quantum Gravity}, 24:4169--4190, 2007.

\bibitem{Li:2015wsa}
C.~Li and J.~Lucietti.
\newblock {Transverse deformations of extreme horizons}.
\newblock {\em Classical and Quantum Gravity}, 33(7):075015, 2016.

\bibitem{Li:2018knr}
C.~Li and J.~Lucietti.
\newblock {Electrovacuum spacetime near an extreme horizon}.
\newblock 2018.

\bibitem{Myers1986}
R.~C. Myers and M.~Perry.
\newblock Black holes in higher dimensional space-times.
\newblock {\em Annals of Physics}, 172(2):304--347, 1986.

\bibitem{Pomeransky2006}
A.~Pomeransky and R.~Sen'kov.
\newblock Black ring with two angular momenta.
\newblock {\em arXiv preprint hep-th/0612005}, 2006.

\bibitem{Possel:2003yw}
M.~Possel and S.~Silva.
\newblock {Hidden symmetries in minimal five-dimensional supergravity}.
\newblock {\em Phys. Lett. B}, 580(3-4):273--279, 2004.

\bibitem{Reall:2002bh}
H.~S. Reall.
\newblock {Higher dimensional black holes and supersymmetry}.
\newblock {\em Phys. Rev.}, D68:024024, 2003.
\newblock [Erratum: Phys. Rev.D70,089902(2004)].

\bibitem{Strominger1996}
A.~Strominger and C.~Vafa.
\newblock Microscopic origin of the bekenstein-hawking entropy.
\newblock {\em Physics Letters B}, 379(1):99--104, 1996.

\bibitem{tomizawa2016supersymmetric}
S.~Tomizawa and M.~Nozawa.
\newblock Supersymmetric black lenses in five dimensions.
\newblock {\em Phys. Rev. D}, 94(4):044037, 2016.

\bibitem{Weinstein}
G.~Weinstein.
\newblock Harmonic maps with prescribed singularities into hadamard manifolds.
\newblock {\em Mathematical Research Letters}, 3(6):835--844, 1996.

\end{thebibliography}

\end{document}